\documentclass[aps,pra,reprint,showkeys]{revtex4-2}
\usepackage{amsmath,paralist,amsthm,comment,amssymb}
\usepackage{times,color}
\usepackage{epsfig}
\usepackage{amscd}
\usepackage{tikz-cd}
\usepackage{graphicx,float}
\graphicspath{ {./Figures/} }
\usepackage{caption}
\captionsetup{%
%   labelsep=newline,
   justification=raggedright,
   %labelfont=bf,
  singlelinecheck=off
}
\usepackage{subcaption}
\usepackage{empheq}
\usepackage{appendix}
\usepackage{tikz}
\usetikzlibrary{matrix,arrows}
\usetikzlibrary{shapes.geometric}
\usetikzlibrary{patterns} 

\usepackage{pgfplots}

\usepackage{comment}

\usepackage[breaklinks=true,colorlinks=true,linkcolor=blue,urlcolor=blue,citecolor=red]{hyperref}

\usepackage{float}
\makeatletter
\let\newfloat\newfloat@ltx
\makeatother
\usepackage{algorithm}
\usepackage{algpseudocode}
\algrenewcommand\algorithmicrequire{\textbf{Input:}}
\algrenewcommand\algorithmicensure{\textbf{Output:}}

\usepackage{datetime}

\long\def\/*#1*/{}

\newcommand{\nix}[1]{}

\newtheorem{theorem}{Theorem}

\newtheorem{lemma}[theorem]{Lemma}

\newtheorem{remark}[theorem]{Remark}

\newtheorem{definition}[theorem]{Definition}

\begin{document}
\title{Quantum computation with charge-and-color-permuting twists in qudit color codes}
\author{Manoj G. Gowda}
\author{Pradeep Kiran Sarvepalli}
\affiliation{
Department of Electrical Engineering, Indian Institute of Technology Madras, Chennai 600 036, India
}

%\date{\today}

\begin{abstract}
Twists are defects in the lattice which can be utilized to perform computations on encoded data.
Twists have been studied in various classes of topological codes like qubit and qudit surface codes, qubit color codes and qubit subsystem color codes.
They are known to exhibit projective non-Abelian statistics which is exploited to perform encoded gates.
In this paper, we initiate the study of twists in qudit color codes over odd prime alphabet.
To the best of our knowledge, this is the first study of twists in qudit color codes.
Specifically, we present a systematic construction of twists in qudit color codes that permute both charge and color of the excitations.
We also present a mapping between generalized Pauli operators and strings in the lattice.
Making use of the construction, we give protocols to implement generalized Clifford gates using charge-and-color-permuting twists.
\end{abstract}
\maketitle

\section{Introduction}
Qudits are quantum systems with $d$ levels and can be thought of as generalization of qubits which are two-dimensional quantum systems.
There is significant literature on qudit quantum error correcting codes~\cite{Ashikhmin2001,Grassl2003,Ketkar2006,Sarvepalli2010,You2013,Brell2015,Watson2015, Hutter2015,Marks2017,Haah2021}.
Qudit based quantum computation technologies are gaining pace.
Recently, universal quantum computation with qudits has been realized using optics~\cite{Niu2018,Paesani2021} and trapped ions~\cite{Low2020}.
Using qudits can lead to significant reduction in the number of operations and improvement in circuit depth in the realization of Toffoli gate~\cite{Kiktenko2020}.
Also, many quantum algorithms have been realized using qudits~\cite{Gedik2015, Zhang2019,Lu2019}.
These developments provide some impetus to our study of quantum computation with qudit quantum codes. 
From the perspective of performance, qudit codes exhibit higher threshold to (generalized) depolarizing noise and the threshold increases with increasing dimension of qudits~\cite{Duclos-Cianci2013, Anwar2014, Marks2017, AloshiousSarvepalli2019}.
Moreover, qudit codes pose challenges not encountered in the qubit case and are of interest from a theoretical point of view as well. 

Color codes are an important class of topological quantum codes, they were introduced by Bombin~\textit{et al.}~\cite{Bombin2006}.
They were proposed with a view to enable  transversal quantum gates providing a route to fault-tolerant quantum computation. 
Color codes were generalized to prime  alphabet in Ref.~\cite{Sarvepalli2010}
and shown to support  transversal implementation of the generalized Clifford group. 
Brell further generalized color codes based on finite groups to support non-Abelian excitations~\cite{Brell2015}.
Watson \textit{et al.}~\cite{Watson2015} showed that qudit color codes support generalized Clifford gates in higher spatial dimensions, specifically, the phase gate from higher levels of the Clifford hierarchy.

An alternative route to fault-tolerance through topological codes is to use topological defects like holes and twists in conjunction with code deformation~\cite{Bombin2011, Fowler2011, Landahl2011, Fowler2012,Hastings2015,Brown2017, Yoder2017, Lavasani2018, GowdaSarvepalli2020, GowdaSarvepalli2021}. 
Specifcally, such methods have been studied for color codes in Refs.~\cite{Fowler2011, Landahl2011}.
Twists are a form of defects introduced in the lattice by spoiling some property of the lattice.
Kitaev first suggested the use of twists to encode quantum information~\cite{Kitaev2003}.
Related work on twists in surface codes can be found in Refs.~\cite{Bombin2010, Yu-Wen2012, You2013, Hastings2015,Brown2017, Lavasani2018, GowdaSarvepalli2020}.
Twists in qubit color codes were first studied by Kesselring \textit{et al.}~\cite{Kesselring2018}.
They also developed a framework for the study of twists in color codes using the theory of domain walls and  cataloged all possible types of twists in qubit color codes.
Their work showed that there are a large number of twist defects possible in color codes. 
This is in sharp contrast to the qubit surface code case where we have only one type of twist defects, namely the charge permuting twists.

In Ref.~\cite{GowdaSarvepalli2021}, the authors studied a construction of charge permuting and color permuting twists from an arbitrary $2$-colex.
For the proposed codes, they also gave protocols to implement encoded generalized Clifford gates.
Twists were studied in topological subsystem color codes in Ref.~\cite{Bombin2011}, where it was shown that Clifford gates can be realized by braiding twists.
Litinski and von Oppen~\cite{Litinski2018-2} discussed twists in Majorana fermion code and adapted the technique of twist-based lattice surgery to fermionic codes.

\medskip

\noindent \emph{Contributions.} 
While there have been many prior studies on qudit color codes~\cite{Sarvepalli2010, Hutter2015, Brell2015,Watson2015,Marks2017, AloshiousSarvepalli2019, Haah2021}, 
to the best of our knowledge,
there is no literature on using defects like twists and holes to encode and process quantum information in qudit color codes. 
This motivates our study of twists in qudit color codes and their application to quantum computation.

In this paper, we focus on charge-and-color-permuting twists in qudit color codes over hexagonal lattices but the ideas are applicable to general lattices.
We assume that the qudits are $d$-dimensional where $d$ is an odd prime. 
These twists permute both charge and color of syndromes.
Twists can be introduced in a lattice either by lattice modification (and hence redefining stabilizers on the modified faces in the lattice) or without making modifications to the lattice (charge permuting twists in qubit color codes do not require lattice modification but only stabilizer modification on certain faces~\cite{Kesselring2018, GowdaSarvepalli2021}).
We introduce charge-and-color-permuting twists by lattice modification and a careful assignment of stabilizer generators in the modified lattice.

Some of the encoded generalized Clifford gates are realized by braiding twists.
Twists are braided by modifying lattice and also stabilizers.
During braiding some of the faces (and hence the stabilizers defined on them) in the lattice can grow in size.
This is undesirable as it may lead to high weight stabilizers. 
This poses the challenge of braiding twists in a way that large faces are avoided.
This problem is circumvented by moving twists in such a way that high weight stabilizers are avoided.  

Our contributions are listed below:
\begin{compactenum}[i)]
\item We propose a systematic construction of charge-and-color-permuting twists in qudit color code lattices starting from a $2$-colex.
The lattice modification performed is similar to the one used to create color permuting twists and we define the stabilizers on the modified lattice to obtain charge-and-color-permuting twists in qudit color codes.
The construction is summarized in Theorem~\ref{thm:construction-summary}.
\item We present a mapping between generalized Pauli operators and strings on color codes with and without (charge-and-color-permuting) twists. 
This mapping comes in handy while implementing encoded gates by braiding.
\item  We propose protocols for the implementation of generalized Clifford group.
Broadly, we use the techniques of braiding twists and Pauli frame update.
Pauli frame update is used for realizing multiplier and DFT gates while braiding is used for phase and CNOT gates. 
\end{compactenum}

One of the appealing features of the proposed protocols for gates is that the lattice modifications are kept simple.
Qudits are added or disentangled from  the code lattice while being retained in the underlying lattice.
The code lattice retains much of the regular structure of the underlying lattice which makes tracking the lattice modifications and code deformation simple.

For the qudit color codes in Refs.~\cite{Sarvepalli2010,Watson2015}, many encoded gates can be realized  transversally.
Transversal gates exist in the case of $G$-color codes~\cite{Brell2015} when the group $G$ is Abelian.
The number of physical qudits involved during gate implementation is large in comparison with our proposal. 
However, the circuits for implementing encoded gates have unit depth whereas in our proposal we expect it to be of the order of code distance. 
The qudit color codes in Ref.~\cite{Watson2015} are over over higher spatial dimensions and they can implement a non-Clifford gate transversally. However, the proposed codes are planar and maybe preferable for practical reasons.

\medskip

\noindent \emph{Organization.} This paper is organized as follows.
In Sec.~\ref{sec:background}, we discuss the preliminary material related to  qudit color codes.
We present construction of charge-and-color-permuting twists in Sec.~\ref{sec:charge-color}.
We introduce a mapping of generalized Pauli operators to strings in Sec.~\ref{sec:pauli-string-mapping}.  
We present implementation of gates using twists in Sec.~\ref{sec:gates}.
Some additional details and proofs are relegated to the appendices.

\section{Background}
\label{sec:background}
\subsection{Generalized Pauli operators}
Qudits are quantum systems with $d$ levels.
When $d = 2$, we have qubits.
In this paper, we assume that $d$ is an odd prime.
Let $\mathbb{F}_d$ be a finite field with $d$ elements.
Generalized Pauli operators, also known as Heisenberg-Weyl operators~\cite{Marks2017}, in the qudit case are defined as below~\cite{Ashikhmin2001, Grassl2003, Ketkar2006, Haah2021}:
\begin{subequations}
\begin{eqnarray}
X(a) &=& \sum_{x \in \mathbb{F}_d} | x + a \rangle \langle x|,\\
Z(b) &=& \sum_{x \in \mathbb{F}_d} \omega^{bx} |x \rangle \langle x |,
\end{eqnarray}
\end{subequations}
where $+$ denotes addition modulo $d$ and $\omega = e^{2 \pi i / d}$.
Note that $X(d) = Z(d) = I$ and $X(a) X(b) = X(a + b)$ and $Z(a) Z(b) = Z(a + b)$.
The generalized Pauli group $\mathcal{P}$ is generated by the operators $X(a)$ and $Z(b)$:
\begin{equation}
    \mathcal{P} = \langle \omega^c X(a) Z(b) \vert a,b,c \in \mathbb{F}_d \rangle.
\end{equation}
The operators $X(a)$ and $Z(b)$ obey the commutation relation
\begin{equation}
    Z(b) X(a) = \omega^{ab} X(a) Z(b).
    \label{eq:qudit-XZ-commutation}
\end{equation}
The generalized Pauli operators encountered most often in this paper are $Z(1)$, $Z(d-1)$, $X(1)$ and $X(d-1)$.
They are denoted as $Z$, $Z^\dagger$, $X$ and $X^\dagger$ respectively.

\subsection{Qudit color codes}
To define a qudit color code, we embed a trivalent and three face colorable lattice on a two-dimensional surface.
Such lattices are called 2-colexes. 
As the faces are 3-colorable, we denote by $\mathsf{F}_c$ the set of faces with color $c \in \{r,g,b\}$.
Further 2-colexes can be shown to be bipartite, see for instance~\cite{Sarvepalli2010}. 
In other words, the set of vertices $\mathsf{V}$ in such lattices can be partitioned into two sets, $\mathsf{V}_e$ and $\mathsf{V}_o$ such that no two neighboring vertices will be in the same set.
The set of vertices of a face $f$ is denoted by $V(f)$.
Qudits are placed on the vertices of the lattice and two stabilizer generators are defined on every face:
\begin{subequations}
	\begin{eqnarray}
		B_f^{Z} &=& \prod_{v \in V(f)} Z_v\\ 
		B_f^{X} &=& \prod_{v \in V(f) \cap \mathsf{V}_e} X_v \prod_{v \in V(f) \cap \mathsf{V}_o} X_v^\dagger
	\end{eqnarray}
	\label{eqn:qudit-stab}
\end{subequations}

Note that the $Z$ type stabilizer $B_f^Z$ contains only $Z$ operator and it is independent of the vertex type whereas in  the $X$ type stabilizer $B_f^X$, both $X$ and $X^\dagger$ occur depending on  the type of vertex.

The stabilizers defined in Equations~\eqref{eqn:qudit-stab} are shown in Fig.~\ref{fig:qd-cc-stabilizers}.
Note that any two adjacent faces $f_1$ and $f_2$ share exactly an edge $(u,v)$ where $u \in \mathsf{V}_e$ and $v \in  \mathsf{V}_o$.
The restriction of stabilizers to the common vertices will be $Z_u Z_v$ and $X_u X^\dagger_v$.
Since, these operators commute, the corresponding stabilizers defined on faces $f_1$ and $f_2$ commute.
Therefore, the face stabilizers defined in Equations~\eqref{eqn:qudit-stab} commute. 

\begin{figure}[htb]
    \centering
    \includegraphics[scale = .85]{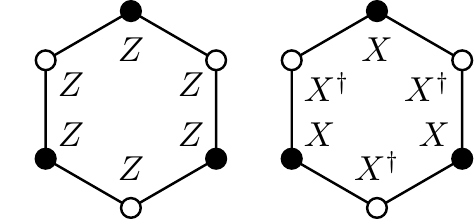}
    \caption{Stabilizers for the qudit color code. Dark and light circles indicate even and odd vertices respectively.} 
    \label{fig:qd-cc-stabilizers}
\end{figure}

The constraints satisfied by these stabilizers is same as that of qubit color codes~\cite{Sarvepalli2010}:
\begin{subequations}
\begin{eqnarray}
\prod_{f \in \mathsf{F}_r} B_f^X &=& \prod_{f \in \mathsf{F}_g} B_f^X = \prod_{f \in \mathsf{F}_b} B_f^X,\\
\prod_{f \in \mathsf{F}_r} B_f^Z &=& \prod_{f \in \mathsf{F}_g} B_f^Z = \prod_{f \in \mathsf{F}_b} B_f^Z.
\end{eqnarray}
\label{eqn:qudit-stab-constraint}
\end{subequations}
There are four constraints and hence four dependent stabilizers.
A graph with $n$ vertices (qudits) embedded on a surface of genus $g$ defines an $[[n, 4g]]_d$ quantum code.

\noindent \emph{Embedding surface and boundaries.} In the rest of this paper, we assume that the graph is embedded on two-dimensional plane.
We also assume that the unbounded face has the same color throughout, as in the case of qubit color codes with twists~\cite{GowdaSarvepalli2021}, see Fig.~\ref{fig:Qd-cc-hexagon-lattice-boundary}.
Multiple edges are introduced along the boundary so that all vertices are trivalent.
It can be shown that such lattices do not encode any logical qudits~\cite{GowdaSarvepalli2021}.

\begin{figure}[htb]
    \centering
    \includegraphics[scale = 1]{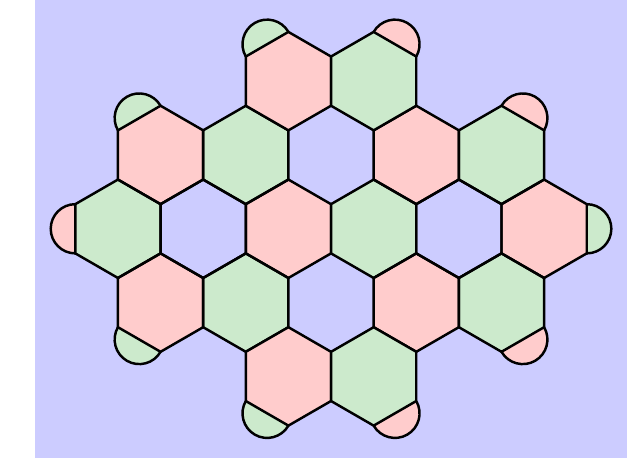}
    \caption{Hexagon lattice with boundary. The unbounded face has blue color and multiple edges are introduced along the boundary to preserve trivalency of vertices.}
    \label{fig:Qd-cc-hexagon-lattice-boundary}
\end{figure}

\subsection{Syndromes  (excitations) in qudit color codes}
Suppose that we have an error $E$, then measuring a stabilizer generator $S$ produces a syndrome depending on the commutation relation between $E$ and $S$. 
Specifically, we have $SE = \omega^a ES$, where $a\in \{ 0, 1, \ldots, d-1\}$.
The eigenvalue $\omega^a$ or more simply $a$ is called the syndrome obtained on measuring $S$.
These syndromes are also called excitations or charges. 
We use these terms interchangeably. 

Suppose that a qudit (in a color code) undergoes $X$ error (recall that we denote the operators $X(1)$ and $Z(1)$ as $X$ and $Z$ respectively), see Fig.~\ref{fig:Qd-cc-X-error-syndrome}.
An $X$ error violates $Z$ type stabilizers on the three faces incident on the vertex (qudit) and hence these three faces will host syndromes.
We can think of the syndromes on the faces as  excitations or quasiparticles which can be labeled by the color of the face
on which they are hosted and the magnitude of the syndrome.
The excitations caused by $X$ type errors are called magnetic charges and denoted $\mu_c$. 

However, when a qudit undergoes a $Z$ error, then, depending on the vertex type, different syndromes are induced on the faces.
Suppose that the vertex on which operator $Z$ acts belongs to $\mathsf{V}_o$.
The stabilizers of the faces have operator $X$ on vertices belonging to $\mathsf{V}_e$.
From Eq.~\eqref{eq:qudit-XZ-commutation}, we see that the syndrome is $\omega^{1}$, equivalently $1$
on the three faces incident on the vertex. 
The excitations produced by $Z$ type errors are electric charges.
The charge induced on a face of color $c$ by a $Z$ error is denoted $\epsilon_c$, see Fig.~\ref{fig:Qd-cc-Z-error-syndrome-minus}.
Similarly, if the error operator $Z$ acts on vertex belonging to $\mathsf{V}_e$, then charge $\epsilon_c^{-1}$ is induced on a face of color $c$, see Fig.~\ref{fig:Qd-cc-Z-error-syndrome-plus}.

\begin{figure}[htb]
    \centering
    \begin{subfigure}{.225\textwidth}
        \centering
        \includegraphics[scale = 1.1]{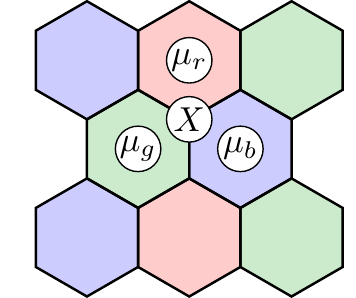}
        \subcaption{}
    \label{fig:Qd-cc-X-error-syndrome}
    \end{subfigure}
    ~
     \begin{subfigure}{.225\textwidth}
        \centering
        \includegraphics[scale = 1.1]{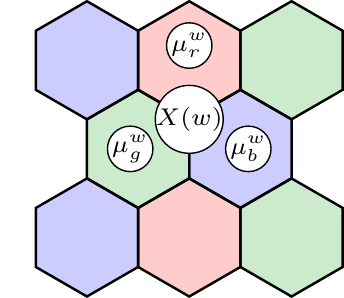}
        \subcaption{}
    \label{fig:Qd-cc-weighted-pauli}
    \end{subfigure}
     \begin{subfigure}{.225\textwidth}
        \centering
        \includegraphics[scale = 1.1]{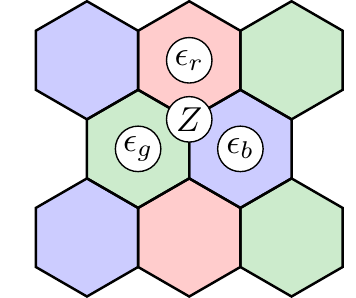}
        \subcaption{}
    \label{fig:Qd-cc-Z-error-syndrome-minus}
    \end{subfigure}
    ~
  \begin{subfigure}{.225\textwidth}
        \centering
        \includegraphics[scale = 1.1]{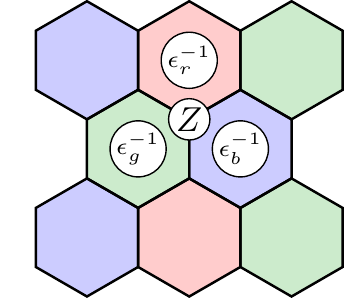}
        \subcaption{}
    \label{fig:Qd-cc-Z-error-syndrome-plus}
    \end{subfigure}
    \caption{ Syndromes created on faces as a result of $X$ and $Z$ error. (a) An $X$ error on a qudit violates $Z$ stabilizers on three faces. Hence, three syndromes are created. (b) The syndromes induced by operator $X(w)$. (c) A $Z$ error acting on a vertex $u \in \mathsf{V}_e$. Note that the syndromes created on the faces incident on the vertex are $\epsilon_c^{-1}$. (d) When a $Z$ error acts on a vertex $u \in \mathsf{V}_o$, the syndromes $\epsilon_c$ are created on faces incident on the vertex.
    }
    \label{fig:Qd-cc-pauli}
\end{figure}

In case of qudits, we also need to represent powers of operators $Z$ and $X$.
This is done by indicating the power along with the operator, see Fig.~\ref{fig:Qd-cc-weighted-pauli}.
If the error operator $X(w)$ acts on a vertex, it induces syndromes $\mu_c^w$ on faces of color $c$.
When the operator $Z(w)$ acts on a vertex, it induces either $\epsilon_c^{-w}$ or $\epsilon_c^{w}$ depending on whether the vertex belongs to $\mathsf{V}_e$ or $\mathsf{V}_o$ respectively.

Excitations in color code are characterized by the color of face on which the excitation lives on and the error that caused it.
The set of excitations in the qudit color code is given below:
\begin{equation}
\mathcal{B} = \{ \mu_c^m \epsilon_c^n, c \in \{r,g,b \}, m,n \in \{0,1,\dots,d-1 \} \}.
\label{eqn:syndromes}
\end{equation}
Totally, there are $3d^2 - 1$ possible nontrivial charges.
However, not all charges are independent; only $2d^2$ of them are independent~\cite{AloshiousSarvepalli2019}.

Suppose we have a multi-qudit error on a face. 
Then each of these errors produces a charge on that face. 
We can only observe the effective charge on that face obtained by combining all the individual charges. 
This brings forth the question of how to combine these charges. 
The symbol $\times$ is used to indicate fusion of charges.
The fusion rules for syndromes in qudit color codes can be obtained by considering the commutation of errors 
causing the charges with the respective stabilizer generators. 
These are given in Table.~\ref{tab:fusion-rules}.
They were also stated in Ref.~\cite{AloshiousSarvepalli2019}.

\begin{table}[htb]
    \centering
    \begin{tabular}{lll}
         \hline
         \hline
         $\epsilon_c^m \times \epsilon_c^n$ & $=$ & $\epsilon_c^{m + n}$,\\
         $\mu_c^m \times \mu_c^n$ & $=$ & $\mu_c^{m + n}$,\\
         $\mu_c^m \times \epsilon_c^n$ & $=$ & $\mu_c^{m}\epsilon_c^n$,\\
         $\mu_c^m \times \mu_{c^\prime}^m$ & $=$ & $\mu_{c^{\prime\prime}}^{-m}$,\\
         $\epsilon_c^m \times \epsilon_{c^\prime}^m$ & $=$ & $\epsilon_{c^{\prime\prime}}^{-m}$.\\
         \hline
         \hline
    \end{tabular}
    \caption{Fusion rules for excitations in qudit color code.}
    \label{tab:fusion-rules}
\end{table}
Apart from the excitations listed in Equation~\eqref{eqn:syndromes}, we may also encounter excitations of the form $\mu_c^m \times \epsilon_{c^\prime}^n$ where $c \ne c^\prime$.

\noindent \emph{Syndrome movement.} 
Syndrome on a face can be moved to another by application of suitable operators.
The operator $Z_u Z_v$ as shown in Fig.~\ref{fig:Qd-cc-string-pauli-1} creates the syndromes shown in Fig.~\ref{fig:Qd-cc-string-pauli-2}.
Note that the syndromes on blue and green faces vanish leaving only syndromes on red faces.
Suppose that a red face hosts a syndrome $\epsilon_r$ as shown in Fig.~\ref{fig:Qd-cc-syndrome-movement-x}.
Applying operator $Z_u Z_v$ moves the syndrome on the left red face to the right red face.
Similarly, $Z$ syndromes can be moved around by applying operator $X_u^\dagger X_v$, see Fig.~\ref{fig:Qd-cc-string-pauli-5}, Fig.~\ref{fig:Qd-cc-string-pauli-6} and Fig.~\ref{fig:Qd-cc-syndrome-movement-z}.
Such operators that move a syndrome from one face to another are called \textit{hopping operators}~\cite{bombin2012universal,BhagojiSarvepalli2015,AloshiousSarvepalli2019}.
Note that the hopping operators do not commute only with the stabilizers of faces between which excitations are moved and commute with the rest of stabilizers.

\begin{figure}[htb]
    \centering
    \begin{subfigure}{.225\textwidth}
        \centering
        \includegraphics[scale = .85]{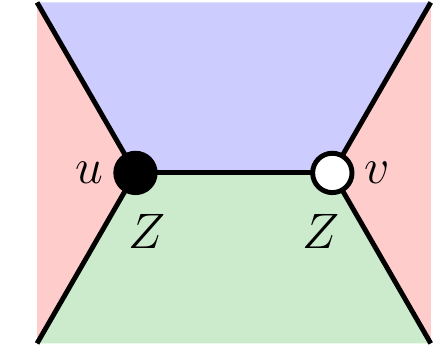}
        \subcaption{}
        \label{fig:Qd-cc-string-pauli-1}
    \end{subfigure}
    ~
    \begin{subfigure}{.225\textwidth}
        \centering
        \includegraphics[scale = .85]{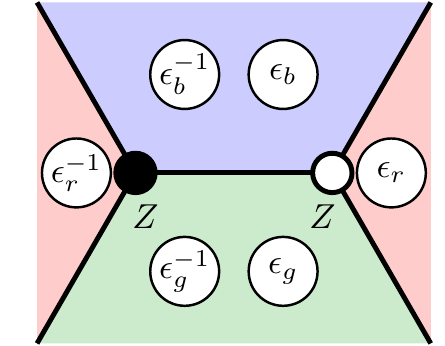}
        \subcaption{}
        \label{fig:Qd-cc-string-pauli-2}
    \end{subfigure}
    ~
    \begin{subfigure}{.225\textwidth}
        \centering
        \includegraphics[scale = .85]{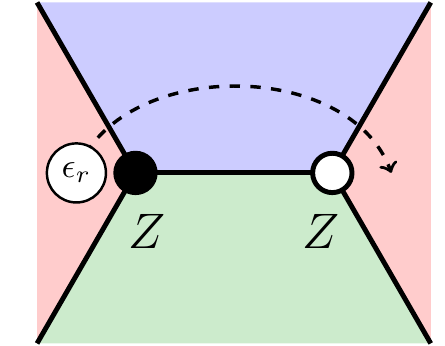}
        \subcaption{}
        \label{fig:Qd-cc-syndrome-movement-x}
    \end{subfigure}
    ~
    \begin{subfigure}{.225\textwidth}
        \centering
        \includegraphics[scale = .85]{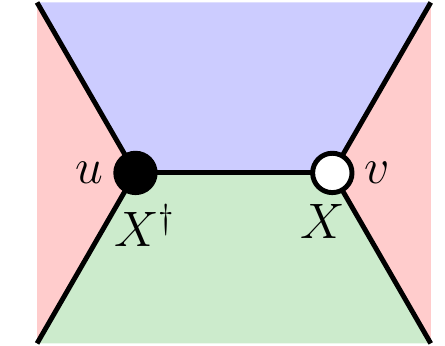}
        \subcaption{}
        \label{fig:Qd-cc-string-pauli-5}
    \end{subfigure}
    ~
    \begin{subfigure}{.225\textwidth}
        \centering
        \includegraphics[scale = .85]{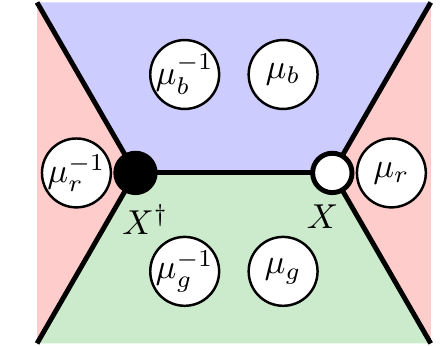}
        \subcaption{}
        \label{fig:Qd-cc-string-pauli-6}
    \end{subfigure}
     ~
    \begin{subfigure}{.225\textwidth}
        \centering
        \includegraphics[scale = .85]{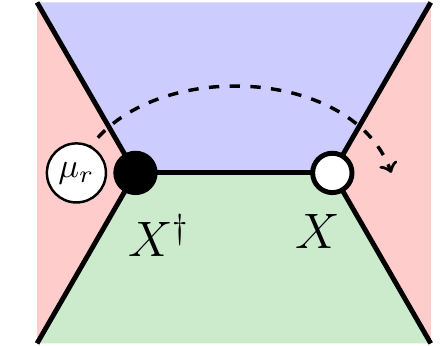}
        \subcaption{}
        \label{fig:Qd-cc-syndrome-movement-z}
    \end{subfigure}
    \caption{Moving the excitations, equivalently, syndromes. (a) Suppose that we have $Z$ operator on two vertices as shown. One of the vertices is even and the other is odd. (b) Each $Z$ operator violates $X$ type stabilizer on the faces incident on the vertex. The syndromes on the faces are as shown. The $Z$ operators shown in (a) can move a syndrome from one face to another as shown in (c) . (d) $X$ operator on two two neighboring vertices. (e) Each $X$ operator introduces syndromes on three faces as shown. (f) Such two qudit operators can move syndromes from one face to another.}
    \label{fig:syndrome_movement}
\end{figure}

\section{charge-and-color-permuting twists}
\label{sec:charge-color}
In this section, we present twists that permute both charge and color of syndromes.
We give a systematic construction of these twists.
There are two parts to this construction: lattice modification and stabilizer assignment.
The procedure given here works for arbitrary $2$-colexes.
We restrict to hexagonal lattice in this paper, but these ideas can be extended to other lattices.

A twist is a face in the lattice that permutes a label of an excitation when that excitation  is moved around it.
A charge-and-color-permuting twist permutes both charge and color label of the excitation.
This exchange is denoted as $\longleftrightarrow$. 
So $e \longleftrightarrow f$ indicates that $e$ and $f$ are exchanged. 
A formal definition is as follows:
\begin{definition}[Charge-and-color-permuting twist]
A charge-and-color-permuting twist of color $c$ has the following action on the excitations when they are moved around it:
\begin{subequations}
\begin{eqnarray*}
\mu_{c}^m &\longleftrightarrow& \epsilon_{c}^m, \\
\mu_{c^\prime}^m  &\longleftrightarrow& \epsilon_{c^{\prime \prime}}^m , \\
\mu_{c^{\prime \prime}}^m  &\longleftrightarrow& \epsilon_{c^\prime}^m,
\end{eqnarray*}
\end{subequations}
where $c$, $c^{\prime} $ and $c^{\prime \prime}$ are all distinct colors, $m = 1,\dots,d-1$
\end{definition}

Note that a charge-and-color-permuting twist of color $c$ permutes only the charge label of an excitation with color label $c$ and permutes both color and charge label for other excitations.
A charge-and-color-permuting twist of color $c$ leaves the following excitations unchanged: $\mu_{c}^\alpha \epsilon_{c}^\alpha$, $\mu_{c^{\prime}}^\alpha \times \epsilon_{c^{\prime \prime}}^\alpha$ and $\mu_{c^{\prime \prime}}^\alpha \times \epsilon_{c^{\prime}}^\alpha$ where $\alpha = 1,2,\dots, d-1$ and the colors $c$, $c^{\prime}$ and $c^{\prime \prime}$ are all distinct.

\subsection{Construction of color codes with charge-and-color-permuting twists}
The introduction of charge-and-color-permuting twists into a color code lattice has 
two parts: lattice modification and stabilizer assignment.
First observe that the charges are moved along the edges from one face to another face. 
In a 2-colex, all the outgoing edges of a face of color $c$ connect it to faces of same color. 
So if we want to introduce color permutation of the charges, we need edges between faces of different color. 
Naturally, this implies a lattice modification. 
We take the following approach. This is similar to the qubit color permuting twists~\cite{GowdaSarvepalli2021}.
The main difference is in the stabilizer assignment step. 
We summarize the procedure below:
\begin{compactenum}[i)]
\item Choose an edge $e = (p,q)$ of color $c \in \{r, g, b \}$. The edge $e$ is common to faces of colors $c^\prime \neq c$ and $c^{\prime \prime} \neq c$. 

\item Connect the neighbors of vertex $p$ with an edge and remove vertex $p$. 
Also, repeat this procedure for vertex $q$.

\item This results in two faces (which were connected by the edge $e = (p,q)$) having an odd number of edges and a face with an even number of edges.
The faces with an odd number of edges are colored $c$ and the face with an even number of edges is colored either $c^\prime$ or $c^{\prime \prime}$.
\end{compactenum}
\medskip

This procedure destroys the local three colorability of the lattice.
As a result, there exists adjacent faces of the same color.
The common edge to such faces form a path in the shrunk lattice of appropriate color.
For instance, in Fig.~\ref{fig:Qd-cc-charge-color-int-4}, this path is formed in the red shrunk lattice.
The sequence of such edges is called $T$-line~\cite{Bombin2011, GowdaSarvepalli2020, GowdaSarvepalli2021}.
The proof for the existence of $T$-lines is similar to that of color codes with color permuting twists~\cite{GowdaSarvepalli2021}.
{We also assume that the twists are sufficiently far apart so that no two $T$-lines cross each other.}
The virtual path in the lattice that marks the point where the syndromes are permuted is called the domain wall.
Note that an alternative definition for twists can be given with reference to the domain wall:
twists are the faces in which the domain wall terminates.
In Fig.~\ref{fig:Qd-cc-charge-color-int-4}, the domain wall is indicated as a dashed line.
The number of charge-and-color-permuting twists is always an even number. Reasoning is similar to the parity of color permuting twists in qubit color codes~\cite{GowdaSarvepalli2021}.

\begin{figure}[htb]
    \centering
        \includegraphics[scale = 1]{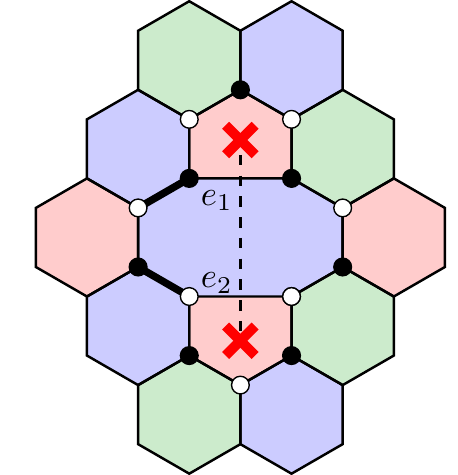}
     \caption{The lattice after modification is no longer bipartite. The twist faces are indicated by a cross and the domain wall is indicated as dashed line terminating in twist faces. Also note that there are now adjacent faces of the same color and the common edges to them form a path in the red shrunk lattice.}
    \label{fig:Qd-cc-charge-color-int-4}
\end{figure}

After the introduction of twists,  the modified lattice is no longer bipartite. 
However, we retain the partitioning of vertices as in  the parent lattice i.e. the lattice before introducing twists.
Now  there are neighboring vertices belonging to the same partition, see Fig.~\ref{fig:Qd-cc-charge-color-int-4}.
Specifically, the common vertices to a twist and a modified face adjacent to it are of the same type, see Fig.~\ref{fig:Qd-cc-charge-color-int-4}.

We can also identify two types of edges in the lattice namely, the edges that connect vertices from different bipartition ($\mathsf{E}_d$) and edges like $e_1$ and $e_2$ in Fig.~\ref{fig:Qd-cc-charge-color-int-4} that connect the vertices from the same bipartition ($\mathsf{E}_s$).
It can be seen that edges in $\mathsf{E}_d$ are incident on faces of the same color whereas edges from $\mathsf{E}_s$ are incident on faces of different color.
Also, the edge in $\mathsf{E}_s$ form the common edges to a twist and a modified face and two adjacent modified faces.
The domain wall cuts across all these edges in the lattice.

Based on the type of edges, we can identify three types of faces in the lattice:
\begin{compactenum}[i)]
\item Normal Faces ($\mathcal{D}_0$): 
These faces do not contain edges from $\mathsf{E}_s$ and have an even number of edges.
Domain wall does not pass through these faces.

\item Modified Faces ($\mathcal{D}_2$): 
These faces contain exactly two edges from $\mathsf{E}_s$ and also have an even number of edges.
Domain wall passes through these faces.
An example of such face is the blue octagon face in Fig.~\ref{fig:Qd-cc-charge-color-int-4}.

\item Twists ($\mathcal{T}$): 
These faces contain exactly an edge from $\mathsf{E}_s$ and have an odd number of edges, see the red pentagon faces in Fig.~\ref{fig:Qd-cc-charge-color-int-4}.
Domain wall terminates in these faces.
\end{compactenum}

\medskip
We next move on to assign stabilizer generators to the three types of faces described above.

\subsubsection{Stabilizer assignment}
We now present stabilizer assignment to each of the aforesaid class of faces.
On normal faces ($\mathcal{D}_0$), the stabilizers are defined as given in Equations~\eqref{eqn:qudit-stab}.
To define stabilizers on a modified face $m$, we first partition the vertices of the face $m$ into two sets: the vertices shared with a $T$-line, $Q_1 = V(m) \cap V_T$ where $V_T$ is the set of all vertices in the support of $T$-lines and the vertices not shared with a $T$-line, $Q_2 = V(m) \setminus Q_1$. 
The stabilizers defined on a modified face $m$ are as follows:
\begin{subequations}
\begin{eqnarray}
B_{m, 1} &=& \prod_{v \in Q_1} Z_v  \prod_{v \in Q_2 \cap \mathsf{V}_e} X_v  \prod_{v \in Q_2 \cap \mathsf{V}_o} X_v^\dagger,\\
B_{m, 2} &=&   \prod_{v \in Q_1 \cap \mathsf{V}_e} X_v  \prod_{v \in Q_1 \cap \mathsf{V}_o} X_v^\dagger \prod_{v \in Q_2} Z_v.
\end{eqnarray}
\label{eqn:qudit-cc-modified-stab}
\end{subequations}

On a twist face $\tau$, only one stabilizer is defined  as given below:
\begin{equation}
    B_\tau = \prod_{v \in V(\tau) \cap \mathsf{V}_e} Z_v X_v \prod_{v \in V(\tau) \cap \mathsf{V}_o} Z_v X_v^\dagger.
    \label{eqn:qudit-cc-twist-stab}
\end{equation}

\begin{figure}[htb]
    \centering
    \begin{subfigure}{.225\textwidth}
        \centering
        \includegraphics[scale = .75]{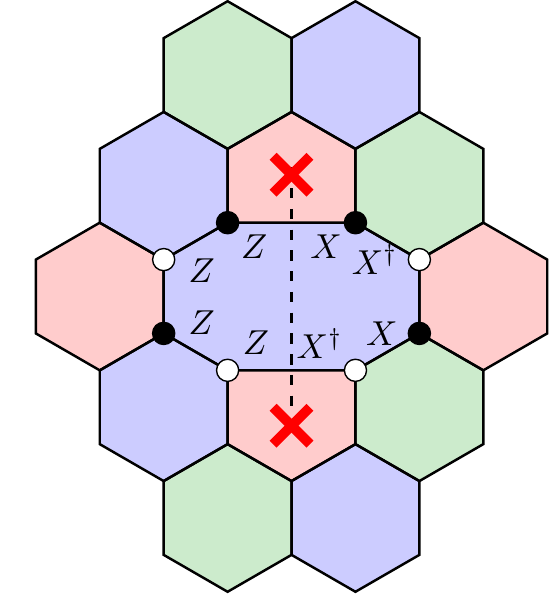}
        \subcaption{}
        \label{fig:Qd-cc-charge-color-domain-wall-stab-1}
    \end{subfigure}
    ~
    \begin{subfigure}{.225\textwidth}
        \centering
        \includegraphics[scale = .75]{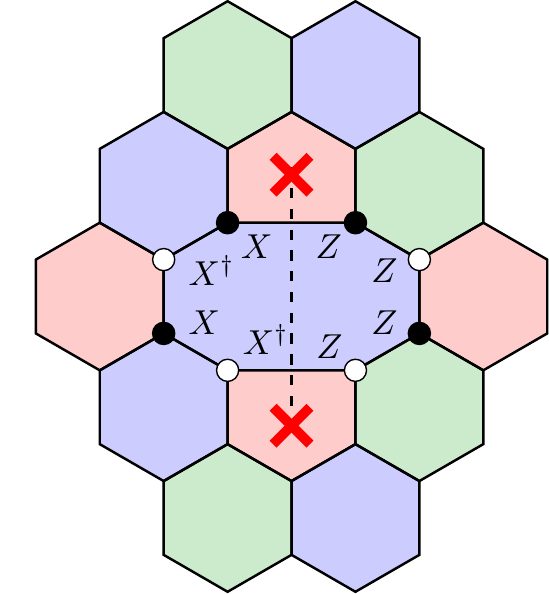}
        \subcaption{}
        \label{fig:Qd-cc-charge-color-domain-wall-stab-2}
    \end{subfigure}
    ~
    \begin{subfigure}{.225\textwidth}
        \centering
        \includegraphics[scale = .75]{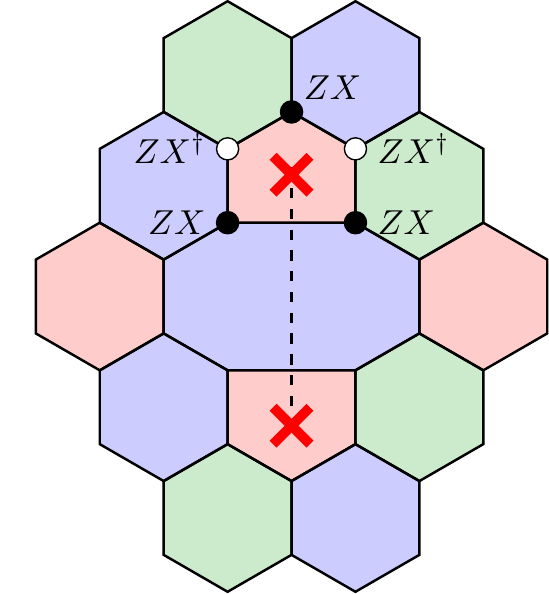}
        \subcaption{}
        \label{fig:Qd-cc-charge-color-permuting-twist-stab}
    \end{subfigure}
    \caption{Stabilizer generators on modified faces and twist. (a) A stabilizer generator with $Z$ on vertices to the left of the domain wall and $X$ to the right. (b) Another stabilizer generator with $X$ on left of the domain wall and $Z$ to the right. (c) Stabilizer on the twist face.}
    \label{fig:Qd-cc-charge-color-domain-wall-stab}
\end{figure}

The stabilizers defined on modified faces are shown in Fig.~\ref{fig:Qd-cc-charge-color-domain-wall-stab-1} and Fig.~\ref{fig:Qd-cc-charge-color-domain-wall-stab-2} and the stabilizer defined on twist is shown in Fig.~\ref{fig:Qd-cc-charge-color-permuting-twist-stab}. 
We next show that the stabilizers defined above commute.

\begin{lemma}[Stabilizer commutation.]
The stabilizer generators defined in Equations~\eqref{eqn:qudit-stab}, ~\eqref{eqn:qudit-cc-modified-stab},~\eqref{eqn:qudit-cc-twist-stab} commute.
\label{lm:stabilizer-commutation}
\end{lemma}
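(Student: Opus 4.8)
The plan is to reduce commutation of any two stabilizers to a single phase condition and then verify that condition edge by edge. For two generalized Pauli operators $P_1,P_2$, I would write their restriction to a shared vertex $v$ as $X(a_i^v)Z(b_i^v)$ up to an overall phase; since operators on distinct qudits commute and repeated use of Eq.~\eqref{eq:qudit-XZ-commutation} gives $P_1P_2=\omega^{\phi}P_2P_1$ with $\phi=\sum_v\big(a_2^v b_1^v-a_1^v b_2^v\big)$, the two stabilizers commute iff $\phi\equiv 0 \pmod d$, the sum running over shared vertices. Each defining operator acts on a vertex as exactly one of $Z$ with $(a,b)=(0,1)$, $X$ with $(1,0)$, $X^\dagger$ with $(-1,0)$, $ZX$ with $(1,1)$, or $ZX^\dagger$ with $(-1,1)$, so $\phi$ becomes an explicit integer sum fixed by which operator sits on each shared vertex. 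Two face stabilizers share support only when their faces coincide or are adjacent, so I would organize the verification by the shared edge: the normal--normal case is already settled in the text, leaving the pairs in which at least one face is modified or a twist. I would split these by whether the shared edge lies in $\mathsf{E}_d$ or $\mathsf{E}_s$, plus the special case of the two generators $B_{m,1},B_{m,2}$ on a single modified face.

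For a shared edge $(u,v)\in\mathsf{E}_d$ the endpoints lie in opposite bipartition classes, $u\in\mathsf{V}_e$, $v\in\mathsf{V}_o$. There every $Z$-type contribution is symmetric ($Z_uZ_v$), every $X$-type contribution is antisymmetric ($X_uX_v^\dagger$), and the twist operator contributes $Z_uX_u$ on the even vertex and $Z_vX_v^\dagger$ on the odd vertex. In each case the per-vertex terms of $\phi$ evaluate to $+1$ on one endpoint and $-1$ on the other, so they cancel; this disposes of the normal--modified and normal--twist pairs (and re-derives normal--normal) exactly as in the unmodified color code.

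The genuinely new cases are the edges in $\mathsf{E}_s$, whose endpoints share the same bipartition class, so the even/odd cancellation is unavailable. Here I would use the fact that the domain wall cuts across every $\mathsf{E}_s$ edge: its two same-type endpoints therefore lie on opposite sides of the wall, one in $Q_1$ and one in $Q_2$. Consequently a modified-face generator carries a $Z$ on one endpoint and an $X^{(\dagger)}$ on the other, while the twist generator carries its symmetric $ZX^{(\dagger)}$ on both; a direct evaluation again gives per-vertex terms $+1$ and $-1$ that cancel, covering the modified--twist and adjacent modified--modified pairs. For $B_{m,1}$ and $B_{m,2}$ on one modified face, which overlap on all of $V(m)$, I would reduce $\phi$ to $\big(|Q_1\cap\mathsf{V}_e|-|Q_1\cap\mathsf{V}_o|\big)-\big(|Q_2\cap\mathsf{V}_e|-|Q_2\cap\mathsf{V}_o|\big)$ and evaluate it geometrically: $m$ has an even number of edges and exactly two $\mathsf{E}_s$ edges, both cut by the domain wall, which splits the boundary cycle into two arcs $Q_1,Q_2$ whose internal edges all lie in $\mathsf{E}_d$ and hence alternate in type. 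A short parity count on the two arcs shows the two bracketed differences are equal, so $\phi\equiv 0$.

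I expect this $\mathsf{E}_s$ analysis to be the main obstacle, since it is precisely where the bipartite cancellation of the ordinary color code breaks down: the content is in recognizing that the domain wall separates the two same-type endpoints of each special edge across $Q_1$ and $Q_2$, and in carrying out the parity bookkeeping on the even-gon modified faces so that the designed $Z$/$X$ split across the wall is shown to compensate. The remaining logical gap, distinct twists sharing support, does not arise because twists are assumed sufficiently far apart, so no twist--twist overlap occurs.
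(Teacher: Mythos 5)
Your proof is correct and follows essentially the same route as the paper's Appendix~\ref{sec:stab-commutation}: a case analysis showing that the restrictions of any two generators to their shared vertices commute, with the new content concentrated exactly where you place it, namely the $\mathsf{E}_s$ edges cut by the domain wall and the pair $B_{m,1},B_{m,2}$ on a single modified face. Your symplectic-phase bookkeeping and the explicit arc-parity count for the same-face case are just a more formal rendering of the paper's observation that the per-edge phases cancel and that modified faces are even cycles, so nothing is missing.
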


Note that any two adjacent faces share exactly an edge.
If the generalized Pauli operators corresponding to stabilizer generators are same on the common vertices, then clearly they commute. 
If the generalized Pauli operators are different, then it can be verified that on the common vertices, the operators are such that the stabilizer generators commute, see Fig.~\ref{fig:Qd-cc-stab-commutation}.
Stabilizers defined on the same face commute as the phase resulting from the exchange of operators corresponding to the two stabilizer generators on any two adjacent vertices is zero.
Since all faces except twists are even cycles, the phase resulting from exchange of operators is zero.
A detailed proof is given in Appendix~\ref{sec:stab-commutation}.

\begin{figure*}
    \centering
    \begin{subfigure}{.225\textwidth}
        \centering
        \includegraphics[scale = .75]{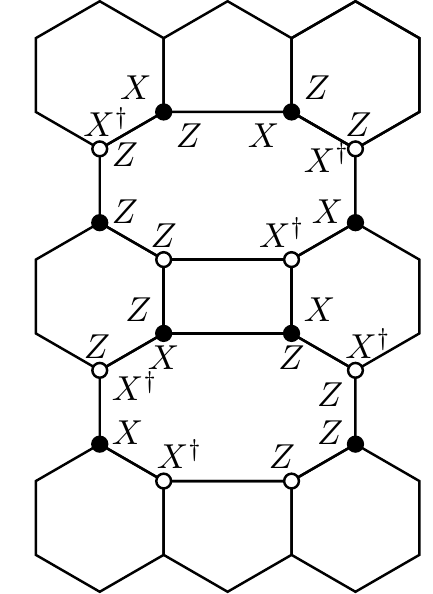}
        \subcaption{}
        \label{fig:Qd-cc-stab-commutation-1}
    \end{subfigure}
    ~
    \begin{subfigure}{.225\textwidth}
        \centering
        \includegraphics[scale = .65]{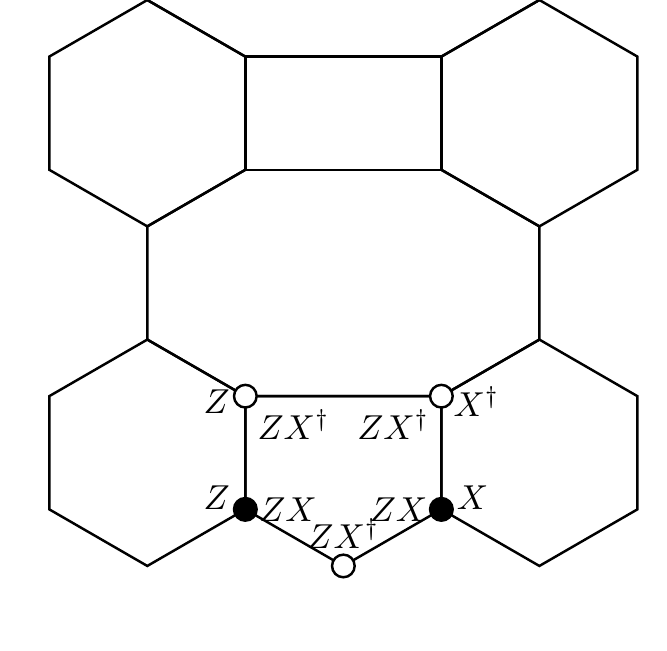}
        \subcaption{}
        \label{fig:Qd-cc-stab-commutation-2}
    \end{subfigure}
    ~
    \begin{subfigure}{.225\textwidth}
        \centering
        \includegraphics[scale = .75]{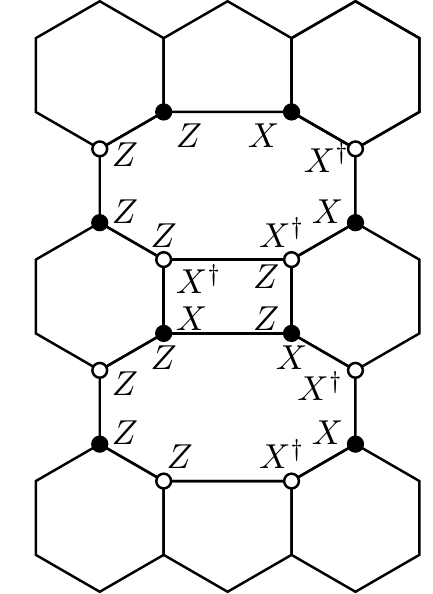}
        \subcaption{}
        \label{fig:Qd-cc-stab-commutation-3}
    \end{subfigure}
     ~
    \begin{subfigure}{.225\textwidth}
        \centering
        \includegraphics[scale = .75]{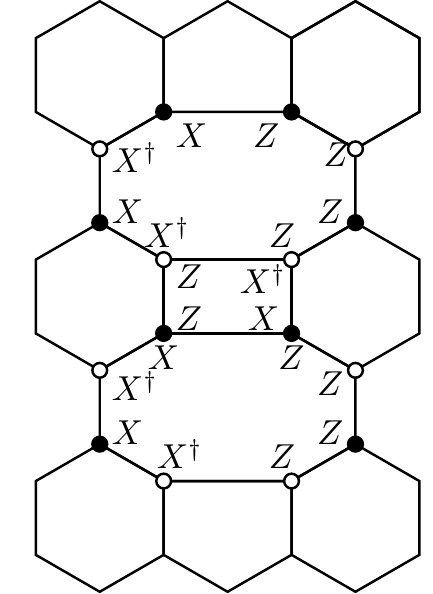}
        \subcaption{}
        \label{fig:Qd-cc-stab-commutation-5}
    \end{subfigure}
    ~
    \begin{subfigure}{.3\textwidth}
        \centering
        \includegraphics[scale = .65]{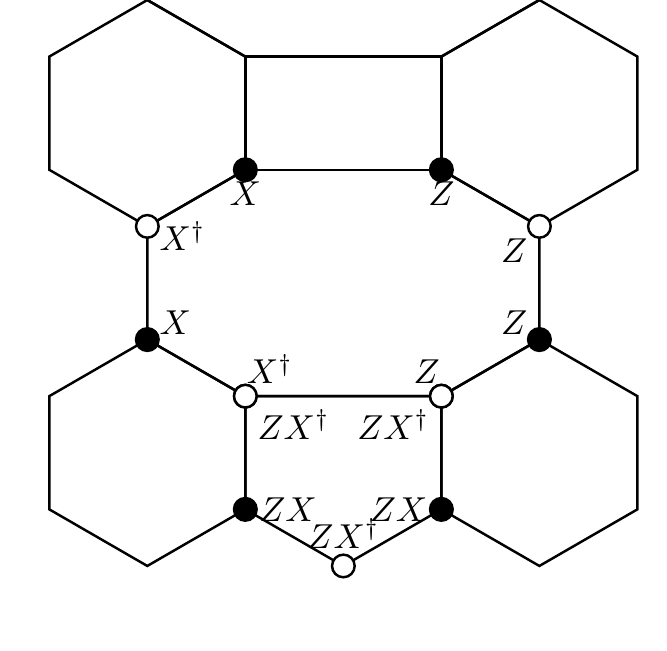}
        \subcaption{}
        \label{fig:Qd-cc-stab-commutation-4}
    \end{subfigure}
     ~
    \begin{subfigure}{.3\textwidth}
        \centering
        \includegraphics[scale = .65]{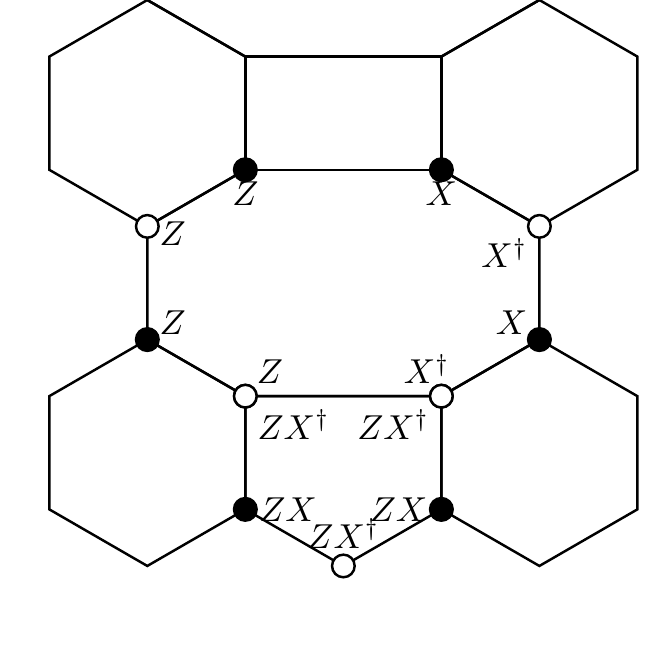}
        \subcaption{}
        \label{fig:Qd-cc-stab-commutation-6}
    \end{subfigure}
    ~
    \begin{subfigure}{.3\textwidth}
        \centering
        \includegraphics[scale = .75]{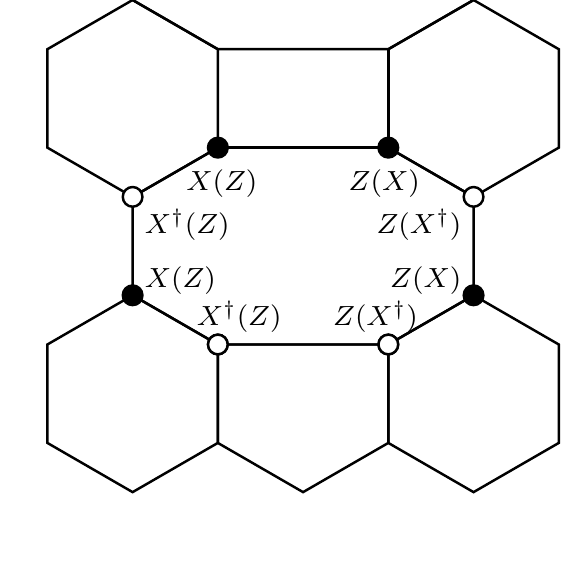}
        \subcaption{}
        \label{fig:Qd-cc-stab-commutation-7}
    \end{subfigure}
    \caption{Commutation of stabilizers defined on a) modified and unmodified faces  b) twist and unmodified face c) - d) two distinct adjacent modified faces and e) - f) modified face and twist. We have only considered cases where the generalized Pauli operators of the stabilizer generators on the common vertices differ. (g) Stabilizer generators defined on the same face.}
    \label{fig:Qd-cc-stab-commutation}
\end{figure*}

With lattice modification and stabilizer assignment in place, we show how excitations are permuted as they are moved across the domain wall.
The sublattice before lattice modification is shown in Fig.~\ref{fig:Qd-cc-charge-color-int-0}.
Twists are introduced by breaking the local three colorability of the sublattice as shown in Fig.~\ref{fig:Qd-cc-charge-color-int-1}.
Suppose that a blue face carries an excitation $\mu_b$ shown in Fig.\ref{fig:Qd-cc-charge-color-int-1}.
We next the apply the operator $X^\dagger_u Z^\dagger_v$, see Fig.~\ref{fig:Qd-cc-charge-color-int-2}.
This operator commutes with the stabilizer generators defined on twist and modified face.
Hence, no excitations are induced on twist and the blue modified face.
Note that the operator $X^\dagger_u Z^\dagger_v$ annihilates the excitation the blue face and creates an excitation on the green face to the right of the domain wall, see Fig.~\ref{fig:Qd-cc-charge-color-int-3}.
Effectively, this operation has moved the excitation $\mu_b$ across the domain wall and permuted it to $\epsilon_g$.
Thus, the required permutation of excitations is achieved with the lattice modification and the stabilizer assignment. 

\begin{figure}[htb]
    \centering
    \begin{subfigure}{.225\textwidth}
        \centering
        \includegraphics[scale = .85]{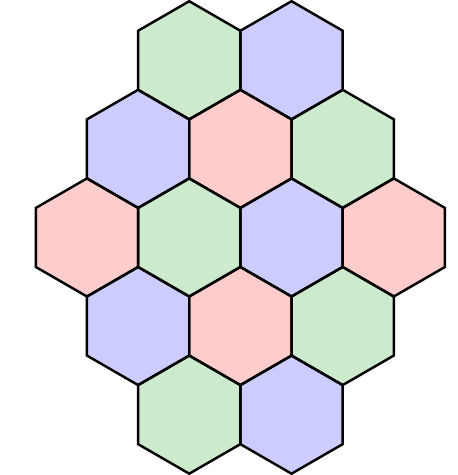}
        \subcaption{}
        \label{fig:Qd-cc-charge-color-int-0}
    \end{subfigure}
    ~
    \begin{subfigure}{.225\textwidth}
        \centering
        \includegraphics[scale = .85]{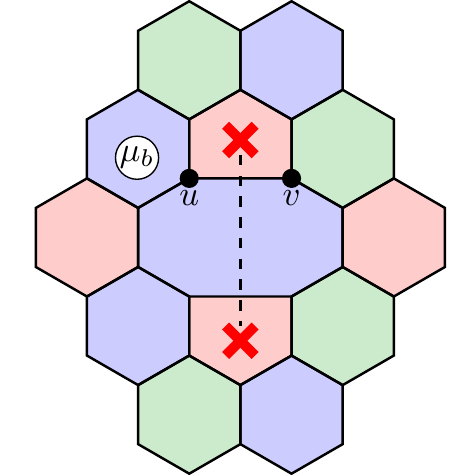}
        \subcaption{}
        \label{fig:Qd-cc-charge-color-int-1}
    \end{subfigure}
    ~
    \begin{subfigure}{.225\textwidth}
        \centering
        \includegraphics[scale = .95]{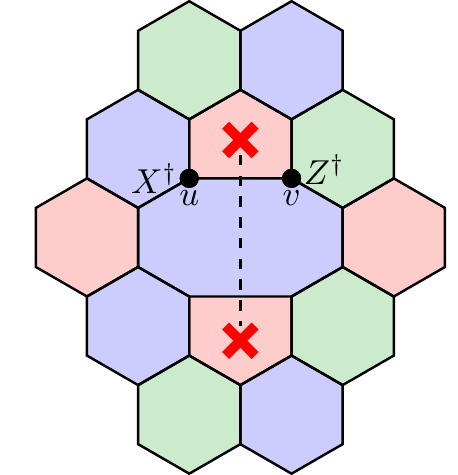}
        \subcaption{}
        \label{fig:Qd-cc-charge-color-int-2}
    \end{subfigure}
     ~
    \begin{subfigure}{.225\textwidth}
        \centering
        \includegraphics[scale = .95]{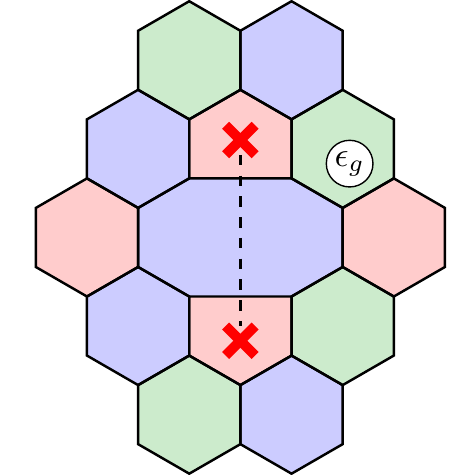}
        \subcaption{ }
        \label{fig:Qd-cc-charge-color-int-3}
    \end{subfigure}
    \caption{Illustration of charge and color permutation induced by the twist. (a) A sublattice of a $2$-colex. (b) Twists are created by altering the sublattice as shown. We would like to move the charge $\mu_b$  on the blue face across the domain wall. (c) The operator $X^\dagger_u Z^\dagger_v$ commutes with the stabilizers on twist and modified face. (d) The operator $X^\dagger_u Z^\dagger_v$ annihilates the charge on the blue face and creates the charge $\epsilon_g$ on the green face.}
    \label{fig:Qd-cc-charge-color-demo}
\end{figure}

The stabilizer generators described are not all independent.
The dependency among the stabilizer generators is explored next.

\subsubsection{Stabilizer constraint}
In the most general case, a lattice can have twists of different colors.
Giving stabilizer constraints for such case where all twists are not of the same color is difficult.
So we restrict to the case where all twists in the lattice are of red color and modified faces are colored blue or red. 
(Blue modified faces are created during twist introduction, while the red modified faces are obtained during twist movement wherein the old twist becomes the new modified face).
The stabilizers defined in Equations~\eqref{eqn:qudit-stab}, ~\eqref{eqn:qudit-cc-modified-stab},~\eqref{eqn:qudit-cc-twist-stab} satisfy the following constraint: 
\begin{equation}
    \prod_{m \in \mathsf{F}_b \cap \mathcal{D}_2} B_{m,1} B_{m,2} \prod_{f \in \mathsf{F}_b \setminus \mathcal{D}_2} B_{f}^X B_{f}^Z  \prod_{f \in \mathsf{F}_g} {B_{f}^X}^\dagger {B_{f}^Z}^\dagger =  W^2,
    \label{eqn:stab-constr-1}
\end{equation}
where $\mathcal{D}_2$ is the set of all modified faces and $W$ is the operator as defined below:
\begin{equation*}
    W =  \prod_{v \in V_t \cap \mathsf{V}_e} Z_v X_v \prod_{v \in V_W \cap \mathsf{V}_o} Z_v {X_v^\dagger}
\end{equation*}
where $V_W$ is the set of vertices in the support of all the $T$-lines.
The constraint in Equation~\eqref{eqn:stab-constr-1} tells that the product of all stabilizer generators defined on blue faces (modified and unmodified) and the conjugate of the stabilizer generators defined on the green faces is an operator with support on all the $T$-lines.
Note that  only the vertices of the red faces, either a modified face or a twist, not shared with a green face are along the $T$-line.
Therefore, $W$ can be expressed as product of red and green face stabilizers as below:

\begin{eqnarray*}
W &=& \prod_{\tau \in \mathcal{T}} B_{\tau} \prod_{m \in \mathsf{F}_r \cap \mathcal{D}_2} B_{m,1} B_{m,2} \prod_{f \in \mathsf{F}_r \setminus \{ \mathcal{D}_2 \cup \mathcal{T} \}} B_{f}^X B_{f}^Z\\
& & \times \prod_{f \in \mathsf{F}_g} \!\!\!{B_{f}^X}^\dagger {B_{f}^Z}^\dagger
\end{eqnarray*}
where $\mathcal{T}$ is the set of all twist faces.
Combining the above equation and Equation~\eqref{eqn:stab-constr-1}, we get,
\begin{small}
\begin{equation}
\begin{split}
    & \prod_{m \in \mathsf{F}_b \cap \mathcal{D}_2} B_{m,1} B_{m,2} \prod_{f \in \mathsf{F}_b \setminus \mathcal{D}_2} B_{f}^X B_{f}^Z  \prod_{f \in \mathsf{F}_g} {B_{f}^X}^\dagger {B_{f}^Z}^\dagger = \\   
    & \left( \prod_{\tau \in \mathcal{T}} B_{\tau} \prod_{m \in \mathsf{F}_r \cap \mathcal{D}_2} \!\!\! B_{m,1} B_{m,2}  \!\!\!\prod_{f \in \mathsf{F}_r \setminus \{ \mathcal{D}_2 \cup \mathcal{T} \}} B_{f}^X B_{f}^Z \prod_{f \in \mathsf{F}_g} {B_{f}^X}^\dagger {B_{f}^Z}^\dagger \right)^2.
\end{split}
    \label{eqn:stabilizer-constraint}
\end{equation}
\end{small}

The above equation indicates the presence of a dependent stabilizer.
We take the dependent stabilizer to be one of the stabilizer generators on the unbounded blue face.
This makes the other stabilizer on the unbounded face independent.
The independent stabilizer is nonlocal and has to be measured during error correction which is undesirable.
Note that there is a stabilizer generator of the form 
\begin{equation}
    B_{f_e} = \prod_{v \in V(f_e) \cap \mathsf{V}_e} Z_v X_v \prod_{v \in V(f_e) \cap \mathsf{V}_o} Z_v X_v^\dagger.
    \label{eqn:external-face-stabilizer}
\end{equation}
on the blue unbounded face.
This stabilizer generator is the product of $Z$ and $X$ type stabilizer generators.
We choose this to be the independent stabilizer.
Note that the stabilizer defined in Equation~\eqref{eqn:external-face-stabilizer} satisfies the constraint in Equation~\eqref{eqn:stabilizer-constraint}.
Also, note that the stabilizer generator $B_{f_e}$ can be expressed as a combination of other face stabilizers and is therefore dependent. 
With the complete set of stabilizer generators and the constraints they satisfy, we now proceed to derive the number of encoded qudits in lattices with $t$ charge-and-color-permuting twists.

\begin{theorem}[Encoded qudits]
A qudit color code lattice with $t$ charge-and-color-permuting twists encodes $t - 2$ logical qudits.
\end{theorem}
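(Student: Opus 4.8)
The plan is to compute the number of encoded qudits from the identity $k = n - s$, where $n$ is the number of physical qudits and $s$ is the number of \emph{independent} stabilizer generators, obtaining $s$ as the total number of generators minus the number of independent dependencies among them. First I would fix the combinatorics of the modified lattice. Writing $V$, $E$, and $F$ for the numbers of vertices, edges, and faces, the modification in steps (i)--(iii) of the construction deletes a trivalent vertex and reconnects its remaining neighbours, so every vertex stays trivalent and $3V = 2E$; combined with the planar Euler relation $V - E + F = 2$ this gives $F = V/2 + 2$. Since the qudits sit on vertices, $n = V$, and the whole computation reduces to counting generators and dependencies.

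Next I would count generators by face type. Each normal face in $\mathcal{D}_0$ and each modified face in $\mathcal{D}_2$ carries two generators (Equations~\eqref{eqn:qudit-stab} and~\eqref{eqn:qudit-cc-modified-stab}), whereas each of the $t$ twist faces carries the single generator $B_\tau$ of Equation~\eqref{eqn:qudit-cc-twist-stab}. Hence the total generator count is $g_{\mathrm{tot}} = 2F - t = V + 4 - t$, and writing $\delta$ for the number of independent dependencies gives $k = n - (g_{\mathrm{tot}} - \delta) = t - 4 + \delta$. The target $k = t - 2$ is thus equivalent to the single claim $\delta = 2$. As a consistency check, setting $t = 0$ recovers the untwisted planar lattice, whose four color-class relations in Equations~\eqref{eqn:qudit-stab-constraint} give $\delta = 4$ and hence $k = 0$, matching the stated fact that such lattices encode nothing.

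The heart of the proof, and the step I expect to be the main obstacle, is establishing $\delta = 2$: introducing the twists must destroy exactly two of the four untwisted relations. In the untwisted code the four relations hold because $\prod_{f \in \mathsf{F}_c} B_f^X$ and $\prod_{f \in \mathsf{F}_c} B_f^Z$ are each independent of the color $c$; this uses the bipartiteness and local three-colorability that the construction breaks, and on the twist and modified faces the $X$- and $Z$-type operators are fused into single generators, so the pure color-class products no longer close. I would make this precise using Equation~\eqref{eqn:stabilizer-constraint}: the product of all blue generators against the conjugated green generators is not the identity but equals $W^2$, where $W$ is the $T$-line operator re-expressed through the red and twist generators. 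This exhibits one surviving (combined $X$--$Z$) dependency, with the dependent generator taken on the unbounded blue face as in Equation~\eqref{eqn:external-face-stabilizer}. I would then identify which \emph{second} of the original four relations survives the twisting and, by a support argument along the $T$-lines, verify that these two dependencies are independent and that no third one survives. Equivalently, and perhaps more cleanly, I would argue incrementally: the untwisted lattice has $\delta = 4$, the first pair of twists breaks exactly two color-class relations (dropping $\delta$ to $2$ while keeping $k = 0$), and every subsequent twist only extends the existing $T$-line structure, contributing one generator-deficit each but introducing no new dependency, so $\delta$ stays at $2$ for all $t \ge 2$.

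Substituting $\delta = 2$ into $k = t - 4 + \delta$ yields $k = t - 2$. Throughout I would keep careful track of the boundary: the multiple edges added along the unbounded face preserve trivalency, so $3V = 2E$ and $F = V/2 + 2$ remain valid, and the unbounded blue face must be counted with two generators, one combination of which (namely $B_{f_e}$) is precisely the dependency selected above.
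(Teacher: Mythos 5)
Your combinatorial setup (trivalency, $2e=3v$, Euler's formula, $2f=v+4$, qudits on vertices) matches the paper exactly, and reducing everything to counting generators and dependencies is the right frame. The problem is the generator count and the ensuing claim $\delta=2$. You insist that ``the unbounded blue face must be counted with two generators,'' giving $g_{\mathrm{tot}}=2f-t$, so you need \emph{two} independent relations. But the paper exhibits only \emph{one} relation, Equation~\eqref{eqn:stabilizer-constraint}, and states explicitly that once the dependent generator is taken on the unbounded blue face, ``the other stabilizer on the unbounded face is independent'' (merely nonlocal and hence undesirable to measure). If that is so, your generating set has $\delta=1$ and your formula $k=t-4+\delta$ returns $t-3$, not $t-2$. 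The way the paper actually reaches $t-2$ is definitional, not relational: it \emph{replaces} the two generators on the unbounded face by the single combined generator $B_{f_e}$ of Equation~\eqref{eqn:external-face-stabilizer} (the product of the $X$- and $Z$-type operators), so the total count is $2f-t-1$, and then the one constraint~\eqref{eqn:stabilizer-constraint} shows that this single generator is itself dependent, giving $s=2f-t-2=v-(t-2)$. Your proposal misses this choice entirely, and it is precisely what makes the theorem come out to $t-2$.

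Independently of that, the step you yourself flag as ``the main obstacle'' --- identifying a second surviving relation, or the incremental claim that the first pair of twists drops $\delta$ from $4$ to exactly $2$ and that subsequent twists add no new dependencies --- is asserted, not proved. Nothing in your sketch rules out $\delta=1$ or $\delta=3$; the symmetry heuristic (the domain wall fuses the $X$/$Z$ sectors and permutes colors, so the four color-class relations should ``collapse'') needs to be turned into an explicit product of generators over the modified and twist faces that evaluates to the identity, which is exactly the content of Equation~\eqref{eqn:stabilizer-constraint} and is the only such relation the paper establishes. (Minor point: twists come in pairs, so ``every subsequent twist'' should be ``every subsequent pair of twists.'') To repair the proof, drop the claim $\delta=2$, adopt the paper's assignment of the single generator $B_{f_e}$ to the unbounded face so that $g_{\mathrm{tot}}=2f-t-1$, and then the single dependency from Equation~\eqref{eqn:stabilizer-constraint} suffices.
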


\begin{proof}
All vertices in the lattice are trivalent and hence we get $2e = 3v$ where $e$ and $v$ are the number of edges and vertices in the lattice respectively.
Using this in the Euler formula for a two-dimensional plane, $v + f - e = 2$ where $f$ is the number of faces in the lattice, we get, $2f = v + 4$.
The number of stabilizers is $2f - t - 1$ since we define only one stabilizer on twists and external unbounded blue face.
However, the stabilizer on the unbounded blue face is dependent, see Equation~\eqref{eqn:stabilizer-constraint}.
Hence the number of independent stabilizers is $s = 2f - t -2 = v - (t - 2)$.
Hence, the number of logical qudits is $t - 2 = 2\left(\frac{t}{2} - 1 \right)$.
\end{proof}

\begin{remark}
The number of encoded qudits here is twice that of qudit surface codes with twists~\cite{GowdaSarvepalli2020}.
\end{remark}

\section{Mapping Generalized Pauli operators to Strings}
\label{sec:pauli-string-mapping}
In this section, we present a mapping between generalized Pauli operators and strings in the presence of charge-and-color-permuting twists.
We represent the stabilizer generators and logical operators in the string notation that we develop and give the canonical form of logical operators which will be used while implementing encoded gates with twists.
It is helpful to hide the lattice information and represent the operators as strings.
We need to represent $X$ and $Z$ operators for which we use different types of strings.
A dashed string of any color represents the $X$ operator, see Fig.~\ref{fig:Qd-cc-X-error-string} and a solid string of any color represents the $Z$ operator, see Fig.~\ref{fig:Qd-cc-Z-error-string}.
The string corresponding to an $X$ error terminates on three faces. 
The end points of the string correspond to nonzero syndromes (nontrivial excitations).

In qudit codes, we have to represent powers of generalized Pauli operators.
We use vertex weights to indicate powers of Pauli operator.
The operators $X(w)$ and $Z(w)$ are represented as dashed and solid strings with vertex weight $w$, see Fig.~\ref{fig:Qd-cc-X-error-weighted-string} and Fig.~\ref{fig:Qd-cc-Z-error-weighted-string}.
The excitations at the end points of string corresponding to $X(w)$  are obtained by fusing the excitations 
for the strings $X$, $w$ times. 
So for instance, in this case the end points carry the excitations $\mu_c^{w}$.
We do not explicitly represent these excitations in the string representation. 
They can be found from the  weight of the vertex on which the error occurs. 
Using these strings as the building blocks, we present the string representation for multi-qudit operators.
In this paper, we are primarily interested in operators of the form shown in Fig.~\ref{fig:syndrome_movement}.
Such operators are the building blocks of stabilizer generators and logical operators.

\begin{figure}[htb]
    \centering
    \begin{subfigure}{.225\textwidth}
        \centering
        \includegraphics[scale = 1]{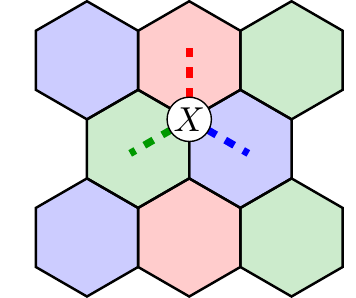}
        \subcaption{}
    \label{fig:Qd-cc-X-error-string}
    \end{subfigure}
    ~
    \begin{subfigure}{.225\textwidth}
        \centering
        \includegraphics[scale = 1]{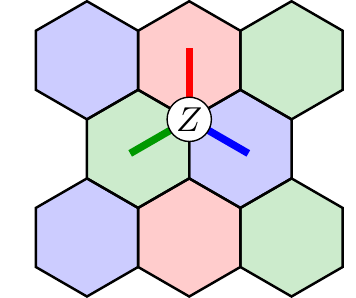}
        \subcaption{}
    \label{fig:Qd-cc-Z-error-string}
    \end{subfigure}
    ~
    \begin{subfigure}{.225\textwidth}
        \centering
        \includegraphics[scale = 1]{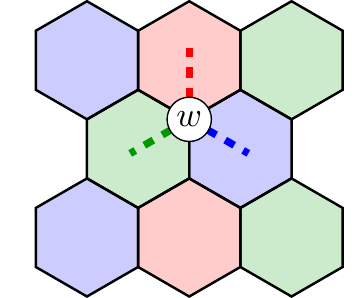}
        \subcaption{}
    \label{fig:Qd-cc-X-error-weighted-string}
    \end{subfigure}
    ~
    \begin{subfigure}{.225\textwidth}
        \centering
        \includegraphics[scale = 1]{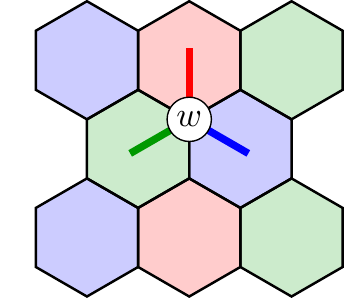}
        \subcaption{}
    \label{fig:Qd-cc-Z-error-weighted-string}
    \end{subfigure}
    \caption{Mapping generalized Pauli operators to strings  in a $2$-colex. (a) We use dashed strings to represent an $X$ error. The strings are open ended in the faces as they have a syndrome. (b) A $Z$ error is represented by a solid string irrespective of whether it acts on vertices in $\mathsf{V}_e$ or $\mathsf{V}_o$. (c)-(d) String representation of $X(w)$ and $Z(w)$ operators respectively.
    }
    \label{fig:Qd-cc-error-strings}
\end{figure}

\subsection{Generalized Pauli operators as strings in a 2-colex}
In general when we combine the elementary strings for $X$ and $Z$ errors on single qudits, we can obtain the string representation for an arbitrary error. 
A subset of the errors and their representations are more useful and adequate for our purposes. 
We are primarily interested in the case where the end of the strings of two distinct errors can be merged.
This is the case when the end points of two strings carry excitations which fuse to the vacuum. 
Equivalently, the syndromes produced on these errors add up to zero.
It is possible to consider the case when the excitations do not fuse to the vacuum by allowing for weighted edges.

We begin by considering the operator $Z_u Z_v$ shown in Fig.~\ref{fig:Qd-cc-string-pauli-1} where $u \in \mathsf{V}_e$ and $v \in \mathsf{V}_o$. 
Using the string representation of the $Z$ operator shown in Fig.~\ref{fig:Qd-cc-Z-error-string}, we obtain the string representation of the operator $Z_u Z_v$ as shown in Fig.~\ref{fig:Qd-cc-string-pauli-12}.
Note the operator $Z_u Z_v$ commutes with the $X$ stabilizer on blue and green faces.
As a result, these faces do not carry any syndrome and hence the strings do not terminate on these faces.
On the other hand, the operator $Z_u Z_v$ violates the $X$ stabilizer on the red faces and hence red faces carry syndrome $\epsilon_r$ due to violation of $X$ stabilizer, see Fig.~\ref{fig:Qd-cc-string-pauli-2}.
Note that from Fig.~\ref{fig:Qd-cc-X-error-string} and Fig.~\ref{fig:Qd-cc-Z-error-string}, a string terminates on faces with nonzero syndrome.
Therefore, the string has its end points in red faces, see Fig.~\ref{fig:Qd-cc-string-pauli-3}.
To simplify the string representation, we adopt the notation shown in Fig.~\ref{fig:Qd-cc-string-pauli-4} which is equivalent to that in Fig.~\ref{fig:Qd-cc-string-pauli-3}. 

Similar arguments can be used to obtain the string representation of the operator $X_u^\dagger X_v, u \in \mathsf{V}_e$ and $v \in \mathsf{V}_o$, shown in Fig.~\ref{fig:Qd-cc-string-pauli-5}.
Strings for the individual $X$ operators are shown in Fig.~\ref{fig:Qd-cc-string-pauli-13}.
The string representation for the operator $X_u^\dagger X_v$ is given in Fig.~\ref{fig:Qd-cc-string-pauli-7} and its simplified representation of string is shown in Fig.~\ref{fig:Qd-cc-string-pauli-8}. 

\begin{figure}[htb]
    \centering
    \begin{subfigure}{.225\textwidth}
        \centering
        \includegraphics[scale = .85]{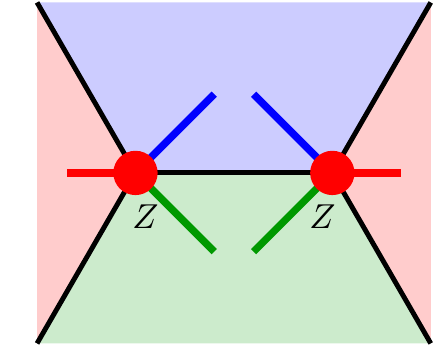}
        \subcaption{}
        \label{fig:Qd-cc-string-pauli-12}
    \end{subfigure}
    ~
    \begin{subfigure}{.225\textwidth}
        \centering
        \includegraphics[scale = .85]{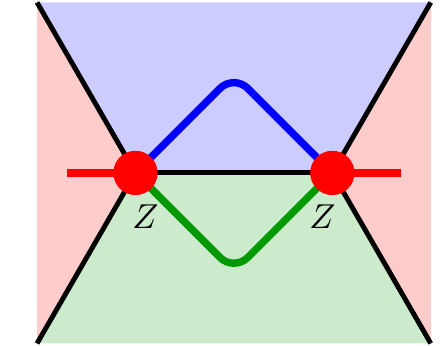}
        \subcaption{}
        \label{fig:Qd-cc-string-pauli-3}
    \end{subfigure}
    ~
    \begin{subfigure}{.225\textwidth}
        \centering
        \includegraphics[scale = .85]{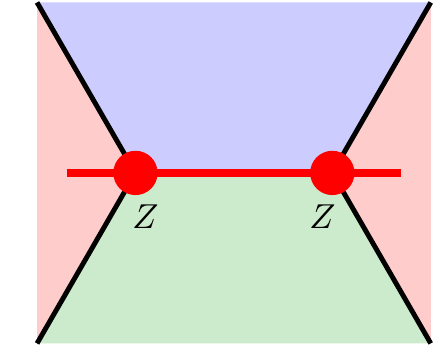}
        \subcaption{}
        \label{fig:Qd-cc-string-pauli-4}
    \end{subfigure}
    ~
    \begin{subfigure}{.225\textwidth}
        \centering
        \includegraphics[scale = .85]{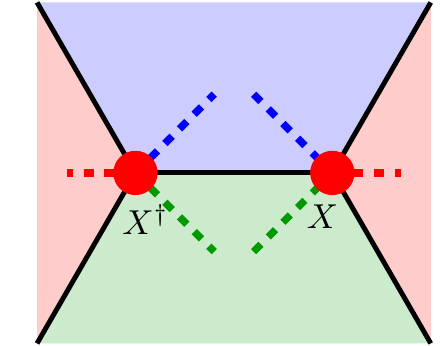}
        \subcaption{}
        \label{fig:Qd-cc-string-pauli-13}
    \end{subfigure}
    ~
    \begin{subfigure}{.225\textwidth}
        \centering
        \includegraphics[scale = .85]{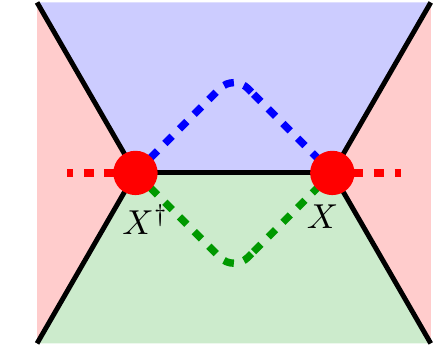}
        \subcaption{}
        \label{fig:Qd-cc-string-pauli-7}
    \end{subfigure}
    ~
    \begin{subfigure}{.225\textwidth}
        \centering
        \includegraphics[scale = .85]{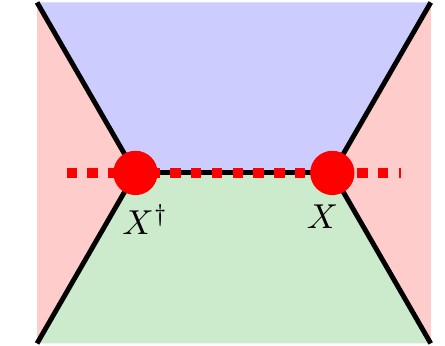}
        \subcaption{}
        \label{fig:Qd-cc-string-pauli-8}
    \end{subfigure}
    \caption{Generalized Pauli operator to string mapping in a 2-colex. (a) Syndrome on blue and green faces vanish as the $X$ stabilizer commutes with the $Z$ error. But red faces host syndromes as stabilizer is violated. (b) The string shown in Fig.~\ref{fig:Qd-cc-string-pauli-3} is represented as shown. Note that this operator can also be seen as the one transporting syndromes from one face to another. (c) Syndromes on green and blue faces vanish whereas that on red faces remain. (d) Blue and green strings in Fig.~\ref{fig:Qd-cc-string-pauli-7} are combined to obtain the string as shown.
    }
    \label{fig:Qd-cc-string-pauli}
\end{figure}

\subsection{String algebra of generalized Pauli operators in the presence of twists}
Recall that the stabilizer generators on twist and modified face are not longer of $Z$ type or  $X$ type.
As a result, the string algebra introduced before needs modification to take into account the modified stabilizers. 
Consider Fig.~\ref{fig:Qd-cc-string-pauli-9}.
The upper red face is twist and the bottom blue face is a modified face.
The blue and green faces on the left and right respectively are unmodified faces.
The syndromes resulting from applying the operator $Z_u X_v$ is shown in Fig.~\ref{fig:Qd-cc-string-pauli-9}.
Note that the syndrome to the left of the domain wall is of green color and not blue.
The reason is that this syndrome can be moved to a green face without crossing the domain wall.
Using the string representation given in  Fig.~\ref{fig:Qd-cc-error-strings}, we arrive at the string representation for the individual operators as shown in Fig.~\ref{fig:Qd-cc-string-pauli-14}.
Note that the error operator commutes with the stabilizers defined on modified face and twist and anticommutes with the stabilizers on blue and green unmodified faces.
Therefore, strings are continuous in twist and modified face, and terminate in the blue and green unmodified faces, see Fig.~\ref{fig:Qd-cc-string-pauli-10}.
The simplified version of strings in Fig.~\ref{fig:Qd-cc-string-pauli-10} is shown in Fig.~\ref{fig:Qd-cc-string-pauli-11}.
Note that the string changes both color and charge as it crosses the domain wall (indicated as dashed line).

\begin{figure}[htb]
    \centering
       \begin{subfigure}{.225\textwidth}
        \centering
        \includegraphics[scale = .85]{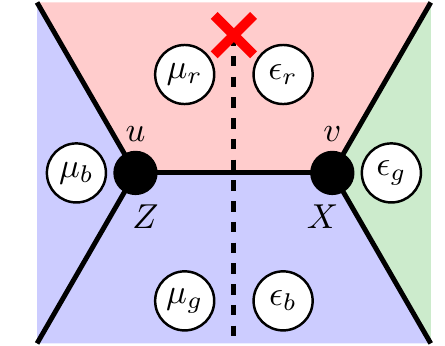}
        \subcaption{}
        \label{fig:Qd-cc-string-pauli-9}
    \end{subfigure}
    ~
    \begin{subfigure}{.225\textwidth}
        \centering
        \includegraphics[scale = .85]{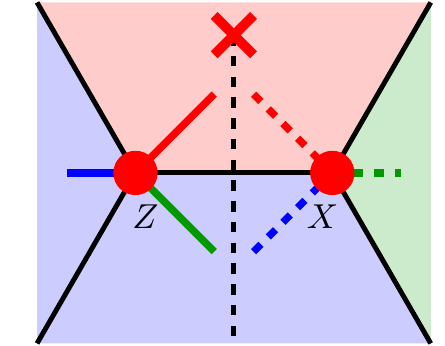}
        \subcaption{}
        \label{fig:Qd-cc-string-pauli-14}
    \end{subfigure}
    ~
    \begin{subfigure}{.225\textwidth}
        \centering
        \includegraphics[scale = .85]{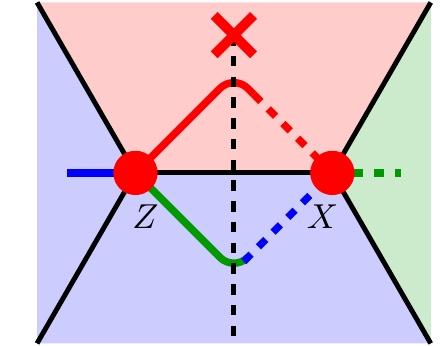}
        \subcaption{}
        \label{fig:Qd-cc-string-pauli-10}
    \end{subfigure}
    ~
    \begin{subfigure}{.225\textwidth}
        \centering
        \includegraphics[scale = .85]{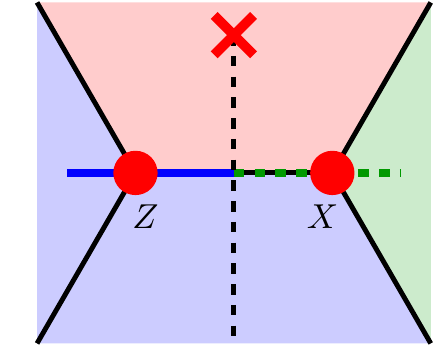}
        \subcaption{}
        \label{fig:Qd-cc-string-pauli-11}
    \end{subfigure}
    \caption{Generalized Pauli operator to string mapping in a lattice with charge-and-color-permuting twists. (a) The operators $Z$ and $X$ create the syndromes as shown. Upper red face is twist and the middle blue face is modified face. (b) The operators $Z$ and $X$ as shown in Fig.~\ref{fig:Qd-cc-string-pauli-9} commute with the stabilizers of twist and modified face. Hence, strings are continuous in these faces. (c) The string in Fig.~\ref{fig:Qd-cc-string-pauli-10} is represented as above for simplicity. Note that the string changes both color and the operator it represents as it crosses the domain wall.}
    \label{fig:Qd-cc-string-pauli-twists}
\end{figure}

We use the string notation developed here to represent stabilizers of twist and modified faces.
Stabilizers of modified face are shown in Fig.~\ref{fig:Qd-cc-charge-color-perm-modified-string-1}~--~\ref{fig:Qd-cc-charge-color-perm-modified-string-4}.  
The strings change both color and charge as they crosses the domain wall.
The string representation of twist stabilizer is shown in Fig.~\ref{fig:Qd-cc-charge-color-perm-twist}.
Here, the string crosses the domain wall twice.
Note that we have given the string representation only for a subset of stabilizer generators.
Our assignment of stabilizer generators is such that it involves only $X$, $X^\dagger$, $Z$ and $Z^\dagger$.

\begin{figure}[htb]
    \centering
    \begin{subfigure}{.225\textwidth}
        \centering
        \includegraphics[scale = .85]{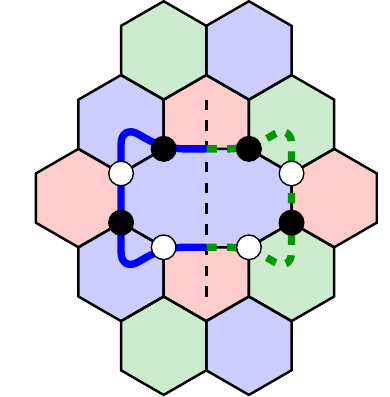}
        \subcaption{}
        \label{fig:Qd-cc-charge-color-perm-modified-string-1}
    \end{subfigure}
    ~
    \begin{subfigure}{.225\textwidth}
        \centering
        \includegraphics[scale = .85]{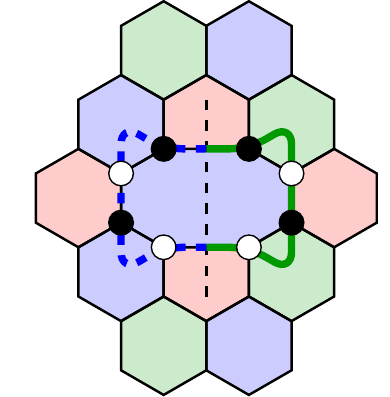}
        \subcaption{}
        \label{fig:Qd-cc-charge-color-perm-modified-string-2}
    \end{subfigure}
        ~
    \begin{subfigure}{.225\textwidth}
        \centering
        \includegraphics[scale = .7]{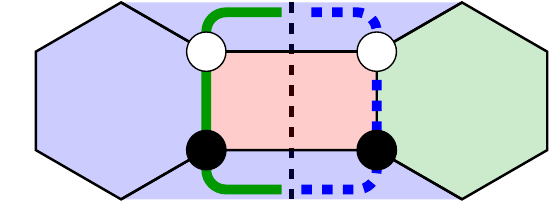}
        \subcaption{}
        \label{fig:Qd-cc-charge-color-perm-modified-string-3}
    \end{subfigure}
        ~
    \begin{subfigure}{.225\textwidth}
        \centering
        \includegraphics[scale = .7]{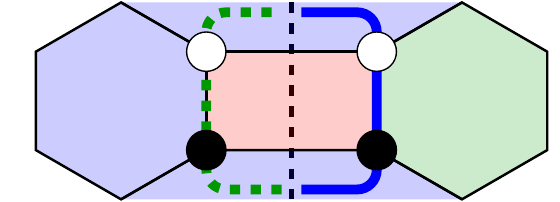}
        \subcaption{}
        \label{fig:Qd-cc-charge-color-perm-modified-string-4}
    \end{subfigure}
    ~
    \begin{subfigure}{.225\textwidth}
        \centering
        \includegraphics[scale = .8]{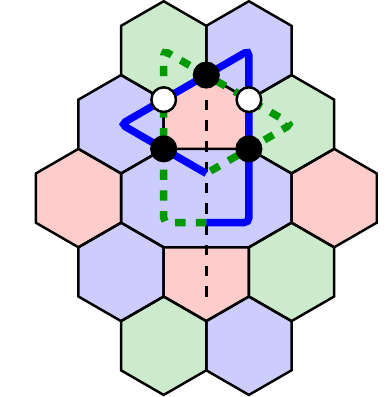}
        \subcaption{}
        \label{fig:Qd-cc-charge-color-perm-twist}
    \end{subfigure}
    \caption{String representation of stabilizers. (a)  This stabilizer has $Z$ operators to the left and $X$ operators to the right of the domain wall. (b) The second stabilizer on modified in string notation which is a mirror image of the first stabilizer. (c)-(d) String representation of stabilizers defined on a red modified face. (e) Twist stabilizer represented in string notation. Note that the string crosses the domain wall twice. }
    \label{fig:Qd-cc-charge-color-perm-modified-twists-string}
\end{figure}

\subsection{Logical Operators}
We now present the logical operators for the encoded qudits.
Logical operators, by definition, commute with all the stabilizer generators and  therefore they do not produce any syndrome. 
Intuitively, we expect that the  string representation of logical operators must be not have any termination, 
in other words, they must be closed strings. 
Logical operators correspond to strings that encircle a pair of twists and are not generated by stabilizers.
Also, logical operators commute with all stabilizers.
This brings forth the question of assigning weights to vertices in the support of logical operators.
Commutation with the stabilizer generators is considered while assigning weights to qudits in the support of logical operators.
The weight assignment is done by taking the stabilizer weight into account.

\begin{figure}[htb]
    \centering
    \begin{subfigure}{.45\textwidth}
        \centering
        \includegraphics[height = 4.75cm, width = 6cm]{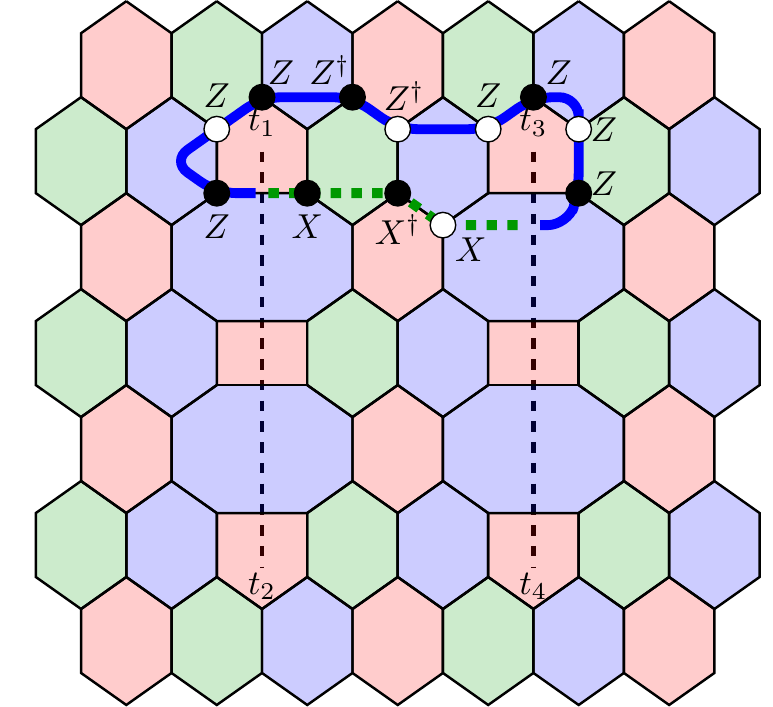}
        \subcaption{}
        \label{fig:Qd-cc-LO-X-charge-color-permuting-lattice}
    \end{subfigure}
    ~
    \begin{subfigure}{.45\textwidth}
        \centering
        \includegraphics[height = 4.75cm, width = 6cm]{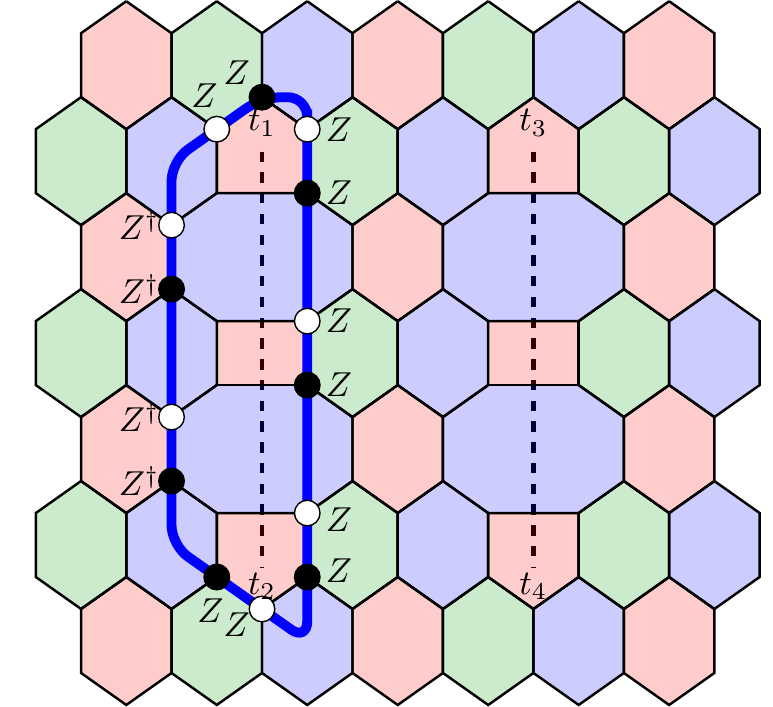}
        \subcaption{}
        \label{fig:Qd-cc-LO-Z-charge-color-permuting-lattice}
    \end{subfigure}
    \caption{Logical operators depicted on lattice.  (a) Logical $X$ operator depicted on the lattice. It encircles twists not created together. Note that the color and operator change as the string crosses the domain wall. (b) Logical $Z$ operator depicted on lattice. It encircles twists created together.}
    \label{fig:Qd-cc-LO-charge-color-permuting-lattice}
\end{figure}

The assignment rules are different for $Z$ and $X$ logical operators.
Recall that the vertices in the qudit color code are partitioned into even and odd vertices.
Let the vertices $u$ and $v$ of a face $f$ be in support of the logical operator.
{We assume that the associated logical operator corresponds to a string that enters through $u$
and exits through $v$.}

\noindent  \emph{Assigning vertex weights for $X$ logical operator.} Note that the $Z$ stabilizer assignment is independent of vertex type and the $X$ operators assigned to vertices $u$ and $v$ of a face $f$ should commute with the $Z$ stabilizer.
Commutation with stabilizers is achieved if the weights associated with $u$ and $v$ differ by $d$, see Fig.~\ref{fig:Qd-cc-LO-X-charge-color-permuting-lattice}.
Therefore, weights for $X$ operator assignment is always $w_{u,x} + w_{v,x} = 0 \mod{d}$  which results in the operator $X_u X_v^\dagger$.

When the string crosses the domain wall, the charge and color change and consequently the rules also change.
Now we have to assign $Z$ and $X$ weights denoted by $w_{u,z}$ and $w_{v,x}$.
Suppose that $u,v \in \mathsf{V}_e$ or  $u \in \mathsf{V}_e, v \in \mathsf{V}_o $, then $w_{u,z} = w_{v,x}$.
If $u,v \in \mathsf{V}_o$ or $u \in \mathsf{V}_o, v \in \mathsf{V}_e$, then $w_{u,z} + w_{v,x} = 0 \mod{d}$.

\noindent \emph{Assigning vertex weights for $Z$ logical operator.} Recall that the $X$ stabilizer assignment is dependent on the vertex type.
If the common vertices $u$ and $v$ between a face $f$ and logical operator are of different kind, i.e. $u \in \mathsf{V}_e$ and $v \in \mathsf{V}_o$ or vice versa, then the logical operator weight associated to the vertices is the same.
The reason being that the restriction of the $X$ stabilizer of the face to these vertices is $X_u X_v^\dagger$ and if the restriction of logical operator to these vertices is  $Z_u(w_{u,z}) Z_v(w_{v,z})$, then commutation with the face stabilizer forces the constraint $w_{u,z} = w_{v,z}$.
If $u$ and $v$ are of the same kind i.e. $u,v \in \mathsf{V}_e$ or $u,v \in \mathsf{V}_o$, the logical operators weights associated to the vertices differ by $d$ i.e. $w_{u,z} + w_{v,z} = 0 \mod{d}$.

The rules for assigning weights to $X$ and $Z$ logical operators are summarized as follows:
\textit{if two vertices $u,v \in V(f)$ are in the support of a logical operator, then,} 
\begin{compactenum}[(i)]
\item \emph{$X$ logical operator}: 
\begin{compactenum}
\item if string does not cross domain wall, then $w_{u,x} + w_{v,x} = 0 \mod{d}$.
\item if string crosses domain wall, then
\begin{compactenum}
\item if $u,v \in \mathsf{V}_e$ or  $u \in \mathsf{V}_e, v \in \mathsf{V}_o $, then $w_{u,z} = w_{v,x}$.
\item if $u,v \in \mathsf{V}_o$ or $u \in \mathsf{V}_o, v \in \mathsf{V}_e$, then $w_{u,z} + w_{v,x} = 0 \mod{d}$.
\end{compactenum}
\end{compactenum}

\item \emph{$Z$ logical operator}:
\begin{compactenum}
\item if $u, v \in \mathsf{V}_e (\text{ or } \mathsf{V}_o)$, then $w_{u,z} = w_{v,z}$,
\item else, $w_{u,z} + w_{v,z} = 0 \mod{d}$. 
\end{compactenum}
\end{compactenum}
\vspace{2mm}

A set of logical operators when four twists are present in the lattice is shown in Fig.~\ref{fig:Qd-cc-LO-charge-color-permuting-lattice}.
The operators shown commute with all stabilizer generators.
These two operators do not commute as the generalized Pauli operator is different on one of the common vertices to the operators. 
Hence, these are not generated by stabilizers and must be nontrivial logical operators.
Observe that logical operators are generated by $X$, $X^\dagger$, $Z$ and $Z^\dagger$ like the stabilizers.
This allows us to represent them  as strings on the lattice without explicitly mentioning the weights of each error.

The full set of logical operators when six twists are present in the lattice is shown in Fig.~\ref{fig:Qd-cc-LO-charge-color}.
Since we have abstracted out the lattice, there is no way to indicate the powers of the generalized Pauli operators in the support of a logical operator.
Therefore, we assign a direction to indicate the choice of powers of generalized Pauli operators in the support of logical operators.
We use clockwise direction to represent the operators shown in Fig.~\ref{fig:Qd-cc-LO-charge-color-permuting-lattice}.
The inverse of these operators will be indicated as strings with counterclockwise direction.

\begin{figure}[htb]
    \centering
    \begin{subfigure}{.45\textwidth}
        \centering
        \includegraphics[scale = .85]{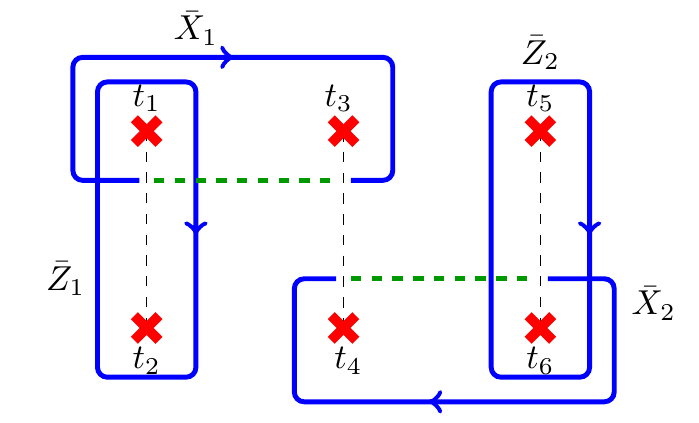}
        \subcaption{}
        \label{fig:Qd-cc-LO-charge-color-canonical}
    \end{subfigure}
    ~
    \begin{subfigure}{.45\textwidth}
        \centering
        \includegraphics[scale = .85]{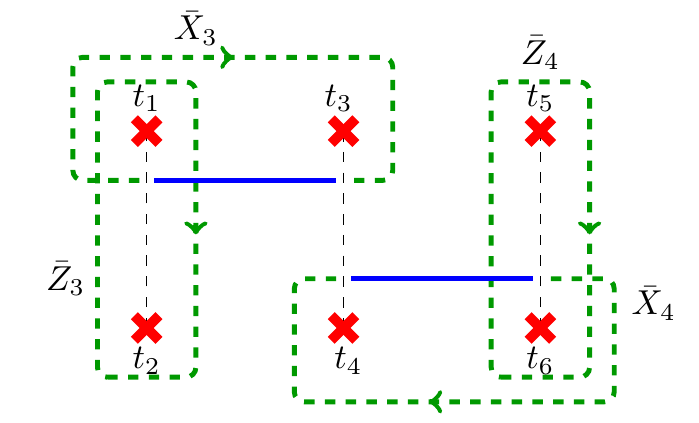}
        \subcaption{}
        \label{fig:Qd-cc-LO-charge-color-full}
    \end{subfigure}
    ~
    \begin{subfigure}{.45\textwidth}
        \centering
        \includegraphics[scale = .85]{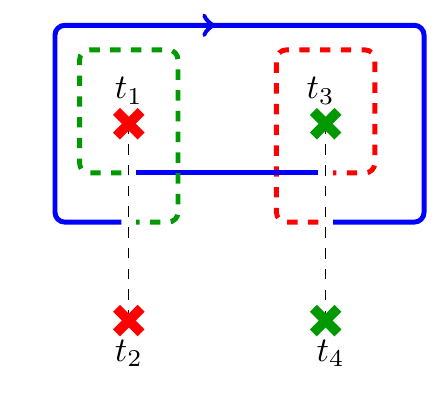}
        \subcaption{}
        \label{fig:Qd-cc-self-intersecting-LO}
    \end{subfigure}
    
    \caption{Full set of logical operators and the canonical logical operators when six twists are present in the lattice, see Fig.~\ref{fig:Qd-cc-error-strings} for the string notation used. (a) The canonical form logical operators when six twists are present in the lattice. (b) The logical operators that are treated as gauge operators. (c) Self-intersecting logical operator when twists of different color are present.}
    \label{fig:Qd-cc-LO-charge-color}
\end{figure}

When twists of different color are used for encoding, then some of the logical operators will be self intersecting, see Fig.~\ref{fig:Qd-cc-self-intersecting-LO}.
To avoid self-intersecting logical operators, we take all twists to be of red color~\cite{GowdaSarvepalli2021}.
However, this choice of logical operators has the drawback that whenever an single qudit encoded gate is to be performed only on say, encoded qudit 1 (2), we inevitably also perform a gate on encoded qudit 3 (4). 
This is because the logical operators of qubits 1 and 3 encircle the same pair of twists, see Fig.~\ref{fig:Qd-cc-LO-charge-color}.
Suppose that twists $t_1$ and $t_2$ are braided to implement a gate on qudit 1.
Because of braiding twists, a gate is also inevitably implemented on qudit 3.
This is undesirable.
Therefore, we treat encoded qubits 3 and 4 in Fig.~\ref{fig:Qd-cc-LO-charge-color-full} as gauge qudits.
The canonical form of logical operators we use are shown in Fig.~\ref{fig:Qd-cc-LO-charge-color-canonical}.
With the choice of logical operators given in Fig.~\ref{fig:Qd-cc-LO-charge-color-canonical}, we get $\lfloor t / 3  \rfloor $ encoded qudits when $t$ twists are present in the lattice.
The rest of the $t - 2 - \lfloor t / 3  \rfloor $ logical qudits are treated as gauge qudits.

\begin{theorem}[Construction]
A qudit color code with $t$ charge-and-color-permuting twists with $\lfloor t / 3\rfloor$ encoding defines a subsystem code with $\lfloor t / 3\rfloor$ logical qudits and $t - 2 - \lfloor t / 3  \rfloor$ gauge qudits.
\label{thm:construction-summary}
\end{theorem}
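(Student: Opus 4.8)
The plan is to assemble the theorem from two results already in place: the counting of total logical qudits from the earlier \emph{Encoded qudits} theorem, which established that a lattice with $t$ charge-and-color-permuting twists encodes $t-2$ logical qudits, and the specific choice of canonical logical operators described just above the statement. The strategy is not to recount degrees of freedom from scratch, but to reinterpret the $t-2$ logical qudits as being split into a protected subsystem and a gauge subsystem, thereby recasting the stabilizer code as a subsystem (gauge) code with $\lfloor t/3\rfloor$ logical qudits.

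First I would recall the subsystem-code framework: given a stabilizer code encoding $k$ logical qudits, one may designate a subgroup of logical operators as gauge operators, and if these gauge operators act nontrivially on $g$ of the encoded qudits, the resulting subsystem code protects $k-g$ logical qudits while treating the remaining $g$ as gauge qudits. Here $k = t-2$. The task is then to show that the canonical logical operators of Fig.~\ref{fig:Qd-cc-LO-charge-color-canonical} correspond to exactly $\lfloor t/3\rfloor$ protected qudits, with the complementary $t-2-\lfloor t/3\rfloor$ qudits relegated to the gauge group.

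The key counting step is to tally the encoded qudits surviving under the canonical choice. With all twists taken to be red (so that self-intersecting logical operators of the kind in Fig.~\ref{fig:Qd-cc-self-intersecting-LO} are avoided), the twists are organized so that each protected qudit is associated with a group of roughly three twists: a logical $X$ string encircling one pair of twists and a logical $Z$ string encircling a different pair sharing a twist, as illustrated in Fig.~\ref{fig:Qd-cc-LO-charge-color-permuting-lattice} and Fig.~\ref{fig:Qd-cc-LO-charge-color-canonical}. I would make explicit that this pattern accounts for $\lfloor t/3\rfloor$ independent anticommuting $(X,Z)$ logical pairs that are retained as genuine logical operators. The remaining logical pairs — those such as the qudits 3 and 4 in Fig.~\ref{fig:Qd-cc-LO-charge-color-full} whose strings encircle the same pairs of twists and hence cannot be addressed independently by braiding — are exactly the ones declared to be gauge. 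Subtracting, $t-2-\lfloor t/3\rfloor$ logical qudits become gauge qudits, which is the claimed count.

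The main obstacle I anticipate is justifying the exact $\lfloor t/3\rfloor$ partition rigorously rather than by appeal to the figures: one must argue that the canonical logical operators form a valid symplectic (anticommuting-pair) basis for a subspace of dimension $\lfloor t/3\rfloor$ logical qudits, that the designated gauge operators genuinely generate a gauge group whose bare logical operators number $t-2-\lfloor t/3\rfloor$, and that the protected and gauge sectors are symplectically complementary. The cleanest route is to exhibit the logical operators explicitly as closed strings encircling prescribed pairs of twists, compute their pairwise commutation relations using the string algebra and commutation relation Eq.~\eqref{eq:qudit-XZ-commutation} developed in the preceding sections, and verify that the protected pairs anticommute in the required way while commuting with the gauge pairs up to the gauge group. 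Once this symplectic structure is pinned down, the subsystem-code bookkeeping gives the stated numbers immediately.
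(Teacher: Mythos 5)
Your proposal is correct and follows essentially the same route as the paper: the paper gives no separate formal proof of Theorem~\ref{thm:construction-summary}, but justifies it in the preceding discussion exactly as you do --- starting from the $t-2$ logical qudits of the \emph{Encoded qudits} theorem, fixing the canonical logical operators of Fig.~\ref{fig:Qd-cc-LO-charge-color-canonical} (three twists per protected qudit, all twists red to avoid self-intersection), and declaring the remaining $t-2-\lfloor t/3\rfloor$ qudits to be gauge. Your added remarks about verifying the symplectic pairing of the canonical operators are a reasonable strengthening, but they do not change the argument.
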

Note that the subsystem codes defined here are not the same as the topological subsystem codes in \cite{Bombin2011} which are defined using two body gauge operators. 
Using the construction of charge-and-color-permuting twists and the string formalism for generalized Pauli operators, we next proceed to present protocols to implement encoded generalized Clifford gates using charge-and-color-permuting twists. 

\section{Generalized Clifford Gates with charge-and-color-permuting twists}
\label{sec:gates}
In this section, we discuss the implementation of encoded generalized Clifford gates using charge-and-color-permuting twists.
The generalized Clifford gates are given below~\cite{Grassl2003}:
\begin{subequations}
\begin{eqnarray}
M_{\gamma} &=& \sum_{x \in \mathbb{F}_{d}} |\gamma x \rangle \langle x |, \gamma \ne 0,\\
P_{\gamma} &=& \sum_{x \in \mathbb{F}_{d}} \omega^{\gamma x^2  / 2} | x \rangle \langle x |, \label{eq:subeq-phase}\\
F &=& \frac{1}{\sqrt{d}} \sum_{x,y \in \mathbb{F}_{d}} \omega^{xy} |y\rangle \langle x|, \\
\text{CNOT}(i,j) &=& \sum_{x,y \in \mathbb{F}_{d}} |x\rangle_i |x + y\rangle_{j} \langle x|_i \langle y|_j.
\end{eqnarray}
\label{eqn:gen-clifford-gates}
\end{subequations}

We now present the protocols to implement generalized Clifford gates by braiding and Pauli frame update.
For single qubit gates, we use Pauli frame update and braiding and for entangling gate, we use braiding.
The encoding used is that given in Fig.~\ref{fig:Qd-cc-LO-charge-color-canonical}.
To prove the correctness of the proposed encoded gates,  we need to verify that the logical operators transform 
as required under conjugation.
\medskip

\noindent \emph{Multiplier gate.}
The conjugation relation for multiplier gate $M_\gamma$ is given below:
\begin{subequations}
\begin{eqnarray}
M_{\gamma} X(\alpha) M_{\gamma}^{-1} &=& X(\gamma \alpha), \\
M_{\gamma} Z(\beta) M_{\gamma}^{-1} &=& Z(\gamma^{-1} \beta).
\end{eqnarray}
\label{eqn:multiplier-gate}
\end{subequations}

We use Pauli frame update~\cite{Hastings2015, GowdaSarvepalli2020} to implement this gate.
Pauli frame update is done classically where the Pauli frame (the information regarding the interchange of Pauli labels of the canonical logical operators) of each encoded qudit is kept track of during computation.
Weights of the operators on vertices in the support of logical operators are updated according to Equations~\eqref{eqn:multiplier-gate}.
For a vertex in the support of $\Bar{X}$ with weight $w$, the updated weight is $\gamma w$ and for that in the support of $\Bar{Z}$, the updated weight is $\gamma^{-1} w$.
\medskip

\noindent \emph{DFT gate.}
The DFT gate is the generalization of the Hadamard gate in the case of qubits.
The conjugation relation for the DFT gate $F$ is given below.
\begin{subequations}
\begin{eqnarray}
F X(\alpha) F^{-1} &=& Z(\alpha) \\
F Z(\alpha) F^{-1} &=& X(-\alpha)
\end{eqnarray}
\end{subequations}
The DFT gate is implemented by Pauli frame update.
The Pauli frame update required is $X(\alpha) \rightarrow Z(\alpha)$, $Z(\alpha) \rightarrow X(-\alpha)$.
This is interchanging the labels of $Z$ and $X$ logical operators and reversing the direction of the new logical $Z$ operator.
\medskip

\noindent \emph{Phase gate.}
The Phase gate $P_\gamma$ has the following conjugation relation.
\begin{subequations}
\begin{eqnarray}
P_{\gamma} X(\alpha) P_{\gamma}^{-1} &=& \omega^{-\gamma \alpha^2 / 2} X(\alpha) Z(-\gamma \alpha) \\
P_{\gamma} Z(\alpha) P_{\gamma}^{-1} &=& Z(\alpha)
\end{eqnarray}
\label{eqn:phase-gate}
\end{subequations}

To realize the gate $P_{1}$, we braid twists $t_1$ and $t_2$ counterclockwise as shown in Fig.~\ref{fig:Qd-cc-phase-gate-1}.
The operator $\Bar{Z}$ encircles twists $t_1$ and $t_2$ and hence is left unchanged by braiding.
The deformation of logical $X$ operator and its equivalence (up to a gauge) to the operator $XZ^\dagger$ is shown in Appendix~\ref{sec:XZdagger}. 
From Eqn.~\eqref{eq:subeq-phase}, it can be inferred that $ P_{\gamma} = P_{1}^\gamma$.
The gate $P_\gamma$ is accomplished by performing the $P_1$ gate $\gamma$ number of times.
Note that the gate $P_{d-1}$ is the conjugate of the gate $P_1$. 
Hence it can be realized by braiding twists $t_1$ and $t_2$ clockwise as shown in Fig.~\ref{fig:Qd-cc-phase-gate-2}.

\begin{figure}[htb]
    \centering
    \begin{subfigure}{.225\textwidth}
        \centering
        \includegraphics[scale = .85]{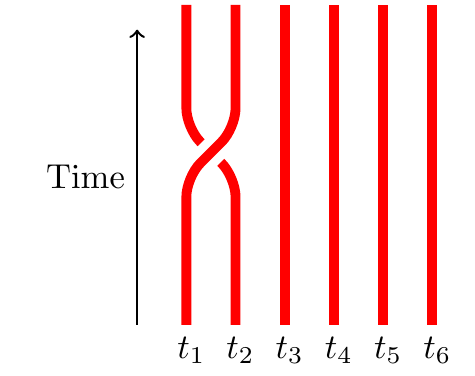}
        \subcaption{}
        \label{fig:Qd-cc-phase-gate-1}
    \end{subfigure}
    ~
    \begin{subfigure}{.225\textwidth}
        \centering
        \includegraphics[scale = .85]{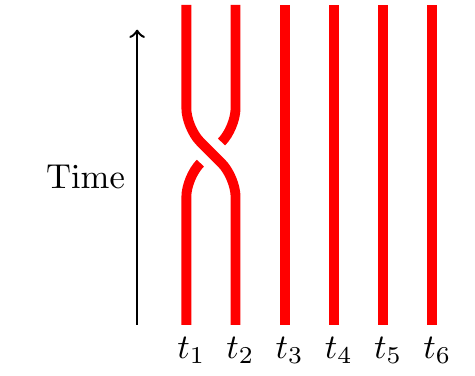}
        \subcaption{}
        \label{fig:Qd-cc-phase-gate-2}
    \end{subfigure}
    \caption{ (a) Braiding twists $t_1$ and $t_2$ counterclockwise to realize the gate $P_1$. (b) Twists $t_1$ and $t_2$ are braided clockwise to realize the gate $P_{d-1}$ ($P^\dagger$).
    }
    \label{fig:Qd-cc-phase-gate}
\end{figure}
\medskip

\noindent CNOT \emph{gate.}
We implement the CNOT gate by using controlled-$Z^\dagger$ ($CZ^\dagger$) gate and DFT gates.
The description of $CZ^\dagger$ gate is given below~\cite{GowdaSarvepalli2020}:

\begin{equation}
   CZ^\dagger(a,b)  = \sum_{x \in \mathbb{F}_d} |x\rangle \langle x|_a \otimes {Z_b^\dagger}(x).
\end{equation}
The protocol for realizing controlled-$Z^\dagger$ gate is given below and is shown in Fig.~\ref{fig:Qd-cc-CZ-dagger-1}.

\medskip

\begin{tabular}{ll}
\hline
 & Controlled-$Z^\dagger$ gate protocol \\
\hline
   1.  &  Braid twists $t_3$ and $t_4$ counterclockwise. \\
    2. &  Perform $P^\dagger$ gate on control and target qubits.\\
\hline
\end{tabular}
\medskip
\begin{figure}[htb]
    \centering
    \begin{subfigure}{.225\textwidth}
        \centering
        \includegraphics[scale = .825]{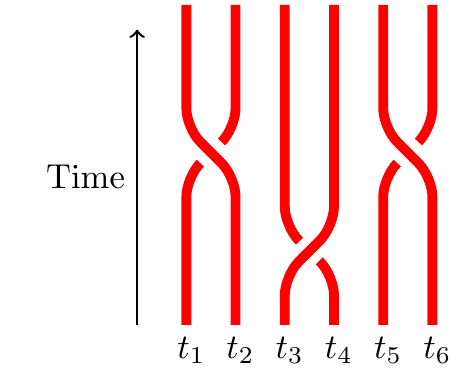}
        \subcaption{}
        \label{fig:Qd-cc-CZ-dagger-1}
    \end{subfigure}
     ~
    \begin{subfigure}{.45\textwidth}
        \centering
        \includegraphics[scale = .85]{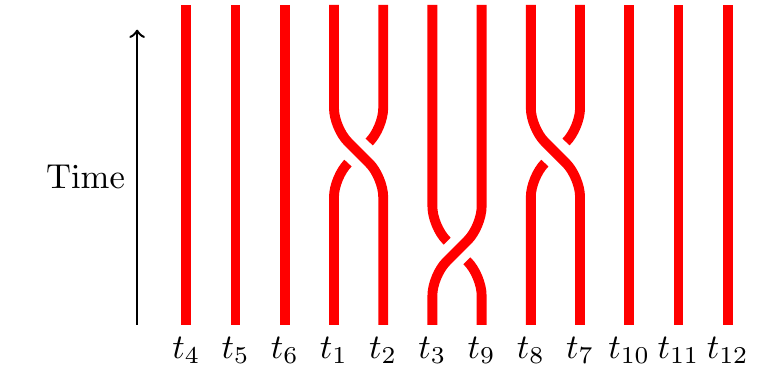}
        \subcaption{}
        \label{fig:Qd-cc-CZ-dagger-4}
    \end{subfigure}
    \caption{Realizing controlled-$Z^\dagger$ on qudits in the same block and qudits from adjacent blocks. (a) Controlled-$Z^\dagger$ gate on logical qudits in the same block is achieved by braiding twists $t_3$ and $t_4$ counterclockwise and twists $t_1$, $t_2$ and $t_5$, $t_6$ clockwise. (b) For doing controlled-$Z^\dagger$ gate between qubits of different block, braid the twists as shown. Note that the twists are not labeled sequentially for the sake of clarity.}
    \label{fig:Qd-cc-CZ-dagger}
\end{figure}

The evolution of logical operators is given below.
\begin{eqnarray*}
\begin{array}{ccccc}
     \Bar{Z}_1 & \stackrel{{(1)}}{\longrightarrow} & \Bar{Z}_1  & \stackrel{{(2)}}{\longrightarrow} & \Bar{Z}_1  \\
   \Bar{X}_1 & \stackrel{{(1)}}{\longrightarrow} & \Bar{X}_1 \Bar{Z}_1^\dagger \Bar{Z}_2^\dagger  & \stackrel{{(2)}}{\longrightarrow} & \Bar{X}_1  \Bar{Z}_2^\dagger  \\
   \Bar{Z}_2 & \stackrel{{(1)}}{\longrightarrow} & \Bar{Z}_2  & \stackrel{{(2)}}{\longrightarrow} & \Bar{Z}_2 \\
   \Bar{X}_2 & \stackrel{{(1)}}{\longrightarrow} & \Bar{Z}_1^\dagger \Bar{Z}_2^\dagger \Bar{X}_2  & \stackrel{{(2)}}{\longrightarrow} &   \Bar{Z}_1^\dagger \Bar{X}_2
\end{array}
\end{eqnarray*}

Logical $Z$ operators of both qudits are left unchanged by braiding twists $t_3$ and $t_4$. 
However, logical $X$ operator is deformed as shown in Fig.~\ref{fig:Qd-cc-CZ-dagger-3}.
The deformed string can be expressed as a combination of string encircling twists  $t_3$ and $t_4$ and the logical $X$ operator.
Note that the string encircling twists $t_3$ and $t_4$ is $\Bar{Z}_1^\dagger \Bar{Z}_2^\dagger$, see Fig.~\ref{fig:Qd-cc-CZ-dagger-2}.
Hence, the string shown in Fig.~\ref{fig:Qd-cc-CZ-dagger-3} corresponds to the operator $\Bar{X}_1 \Bar{Z}_1^\dagger \Bar{Z}_2^\dagger$.
By symmetry, it can be argued that $\Bar{X}_2$ is mapped to $\Bar{Z}_1^\dagger \Bar{Z}_2^\dagger \Bar{X}_2$.
After braiding twists $t_3$ and $t_4$, performing $P^\dagger$ gate on both qudits will lead to $CZ^\dagger$ gate between them.

\begin{figure}[htb]
    \centering
    \begin{subfigure}{.45\textwidth}
        \centering
        \includegraphics[scale = 1]{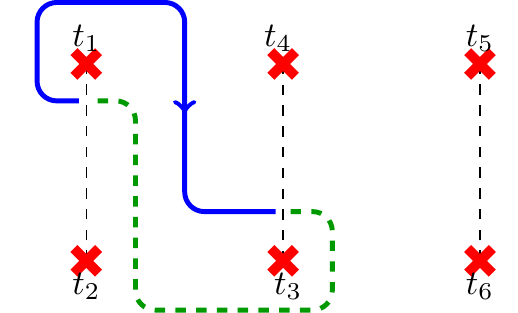}
        \subcaption{}
        \label{fig:Qd-cc-CZ-dagger-3}
    \end{subfigure}
    ~
    \begin{subfigure}{.45\textwidth}
        \centering
        \includegraphics[scale = .9]{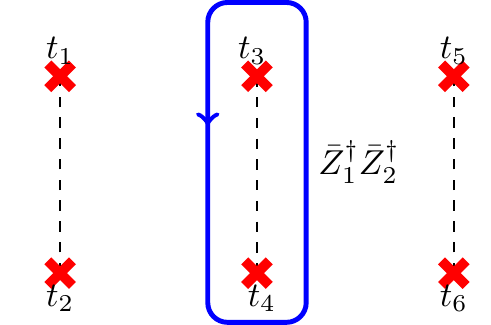}
        \subcaption{}
        \label{fig:Qd-cc-CZ-dagger-2}
    \end{subfigure}
    \caption{ (a) The logical $X_1$ operator is deformed as shown. This operator is the combination of $\Bar{Z}_1^\dagger \Bar{Z}_2^\dagger$ and $X_1$. (b) The blue string encircling twists $t_3$ and $t_4$ is $\Bar{Z}_1^\dagger \Bar{Z}_2^\dagger$. }
    \label{fig:Qd-cc-CZ-dagger-proof}
\end{figure}

So far, we considered qudits from one block which is a set of three pairs of twists that encode two logical qudits.
We also need to be able to perform the CNOT gate between qudits from different blocks.
The same procedure works for qudits from different blocks.
Suppose that twists $t_1$, $t_2$ and $t_3$ encode qudit 1 and twists $t_7$, $t_8$ and $t_9$ encode qudit 3, see Fig.~\ref{fig:Qd-cc-CZ-dagger-4}.
The string encircling twists $t_3$ and $t_9$ can be shown to be $Z_1^\dagger Z_3^\dagger$ in a way similar to that in Ref.~\cite{GowdaSarvepalli2020}.
Braiding twists $t_3$ and $t_9$ followed by $P^\dagger$ gate on control and target qudits will lead to $CZ^\dagger$ gate between them.
The CNOT gate can be obtained from the $CZ^\dagger$ gate by conjugating the target qudit with DFT gate:
\begin{eqnarray*}
F_b (CZ^\dagger (a,b))  F^\dagger_b &=& \sum_{x \in \mathbb{F}_d} |x\rangle \langle x|_a \otimes F_b {Z_b^\dagger}(x) F^\dagger_b \\
  &=& \sum_{x \in \mathbb{F}_d} |x\rangle \langle x|_a \otimes X_b(x)
\end{eqnarray*}
This concludes our discussion on implementation of generalized Clifford gates using charge-and-color-permuting twists.
We next address the issue of lattice modification during braiding.
\medskip

\noindent \emph{Lattice modification during braiding.} Change in the structure of the lattice during twist braiding is shown in Fig.~\ref{fig:Qd-cc-twist-movement}.
In Fig.~\ref{fig:Qd-cc-twist-movement-1}, the initial configuration of twists is shown.
Also shown are the physical qudits that were disentangled during twist creation and movement.
Twist $t_1$ is moved along the path shown in blue in Fig.~\ref{fig:Qd-cc-twist-movement-1} and twist $t_2$ is moved through the modified faces.
There are two distinct steps in braiding: twist movement to an unmodified face as shown in Fig.~\ref{fig:Qd-cc-twist-movement-2} and moving twist to a modified face as shown in Fig.~\ref{fig:Qd-cc-twist-movement-3}.
The latter step involves entangling the physical qudits as shown in Fig.~\ref{fig:Qd-cc-twist-movement-3}.
The portion of the lattice from which twist is moved regains local three colorability, see Fig.~\ref{fig:Qd-cc-twist-movement-4}.
This movement also creates a large modified face. 
However, subsequent twist movement will shrink the size of this face, see Fig.~\ref{fig:Qd-cc-twist-movement-5}.
The subsequent twist movement is shown in Fig.~\ref{fig:Qd-cc-twist-movement-6}, Fig.~\ref{fig:Qd-cc-twist-movement-7} and Fig.~\ref{fig:Qd-cc-twist-movement-8}.
At this stage, the braiding is complete.
However, to minimize the number of modified faces between the twists, we entangle the physical qudits shown in Fig.~\ref{fig:Qd-cc-twist-movement-9} and remove some physical qudits shown in red.
Doing so, we obtain the lattice shown in Fig.~\ref{fig:Qd-cc-twist-movement-10}. 
Note that once the braiding is complete, the large faces resulting from twist movement vanish.
Therefore, braiding twists will not alter the lattice structure drastically, at least in the case of hexagon lattice.
\begin{figure*}
    \centering
    \begin{subfigure}{.3\textwidth}
        \centering
        \includegraphics[scale = .85]{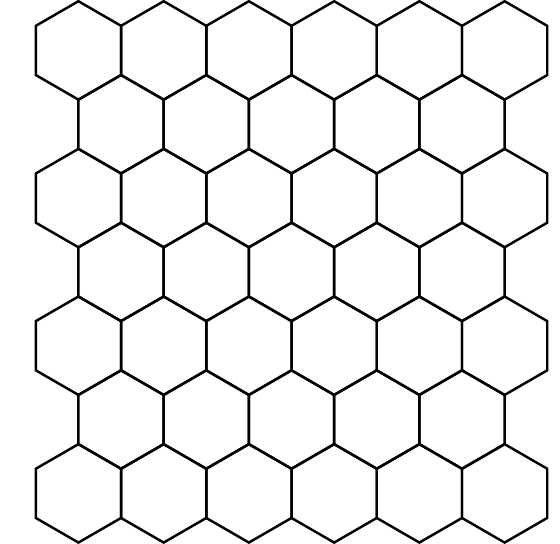}
        \subcaption{}
        \label{fig:Qd-cc-twist-movement-0}
    \end{subfigure}
    ~
    \begin{subfigure}{.3\textwidth}
        \centering
        \includegraphics[scale = .85]{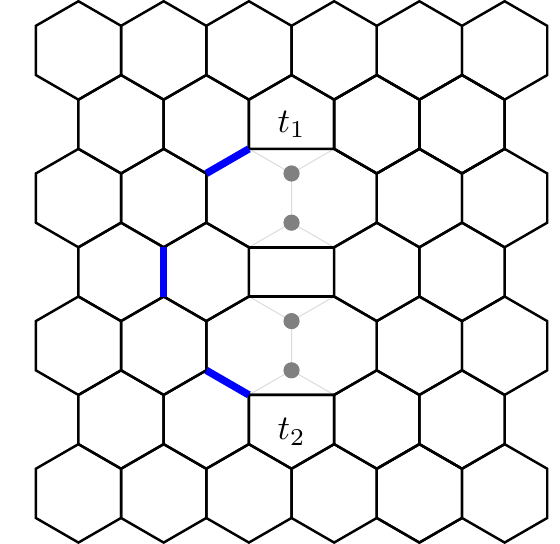}
        \subcaption{}
        \label{fig:Qd-cc-twist-movement-1}
    \end{subfigure}
    ~
    \begin{subfigure}{.3\textwidth}
        \centering
        \includegraphics[scale = .85]{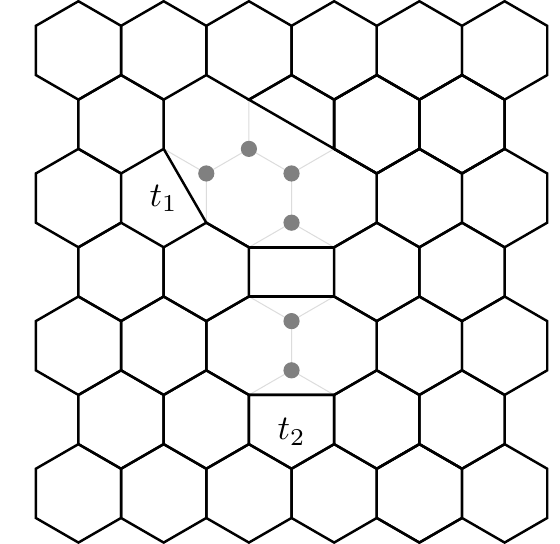}
        \subcaption{}
        \label{fig:Qd-cc-twist-movement-2}
    \end{subfigure}
    ~
    \begin{subfigure}{.3\textwidth}
        \centering
        \includegraphics[scale = .85]{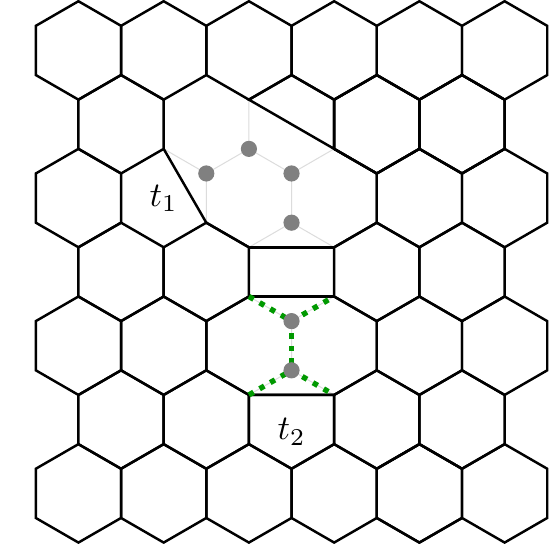}
        \subcaption{}
        \label{fig:Qd-cc-twist-movement-3}
    \end{subfigure}
    ~
    \begin{subfigure}{.3\textwidth}
        \centering
        \includegraphics[scale = .85]{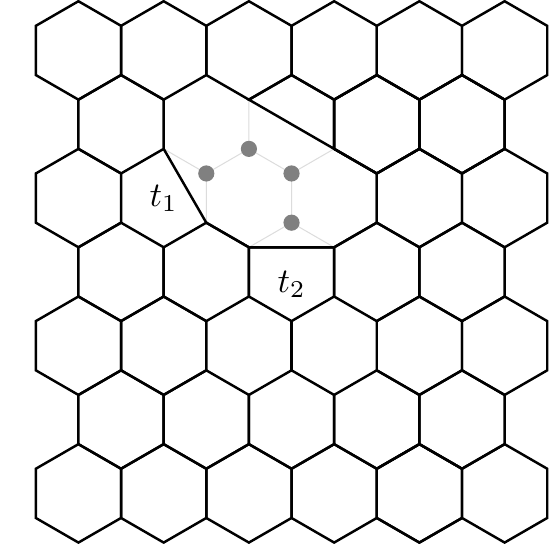}
        \subcaption{}
        \label{fig:Qd-cc-twist-movement-4}
    \end{subfigure}
    ~
    \begin{subfigure}{.3\textwidth}
        \centering
        \includegraphics[scale = .85]{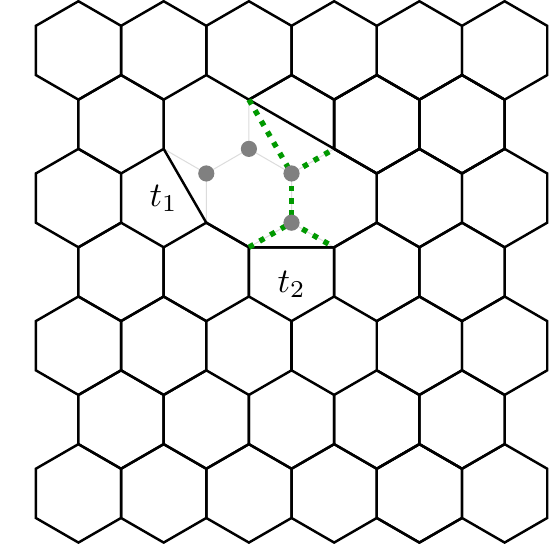}
        \subcaption{}
        \label{fig:Qd-cc-twist-movement-5}
    \end{subfigure}
    ~
    \begin{subfigure}{.3\textwidth}
        \centering
        \includegraphics[scale = .85]{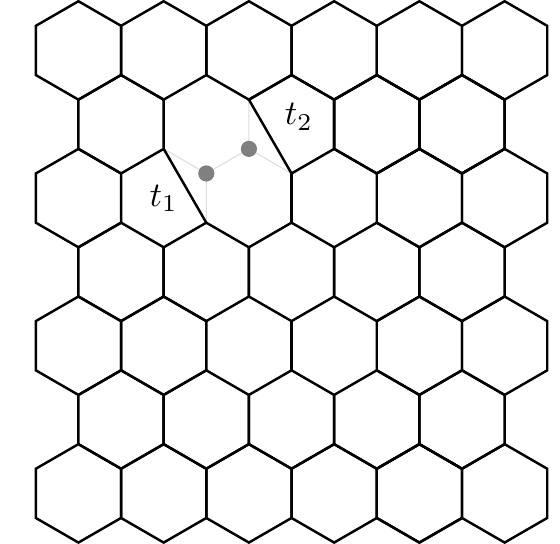}
        \subcaption{}
        \label{fig:Qd-cc-twist-movement-6}
    \end{subfigure}
    ~
    \begin{subfigure}{.3\textwidth}
        \centering
        \includegraphics[scale = .85]{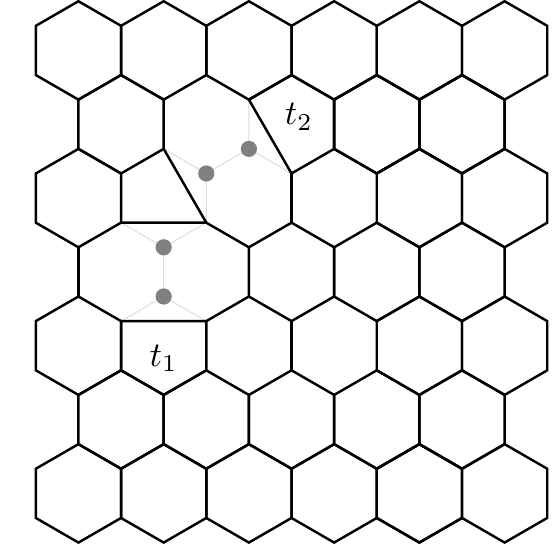}
        \subcaption{}
        \label{fig:Qd-cc-twist-movement-7}
    \end{subfigure}
    ~
    \begin{subfigure}{.3\textwidth}
        \centering
        \includegraphics[scale = .85]{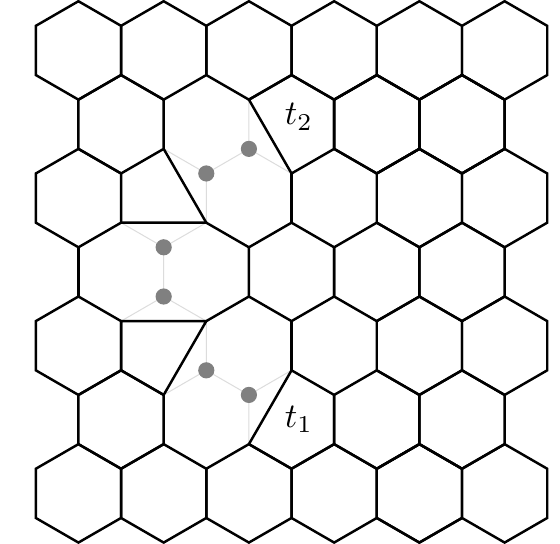}
        \subcaption{}
        \label{fig:Qd-cc-twist-movement-8}
    \end{subfigure}
    ~
    \begin{subfigure}{.3\textwidth}
        \centering
        \includegraphics[scale = .85]{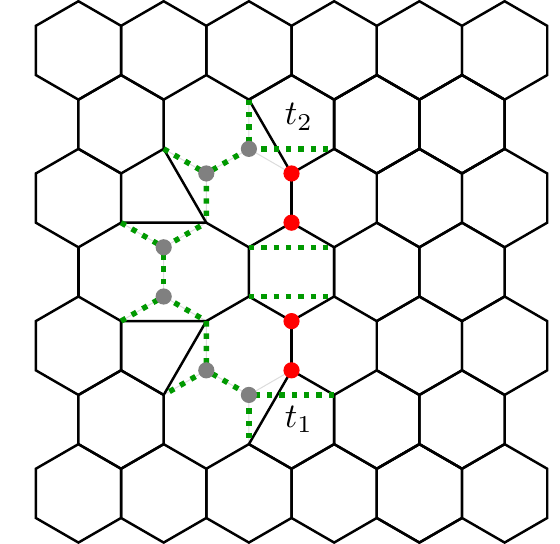}
        \subcaption{}
        \label{fig:Qd-cc-twist-movement-9}
    \end{subfigure}
    ~
    \begin{subfigure}{.3\textwidth}
        \centering
        \includegraphics[scale = .85]{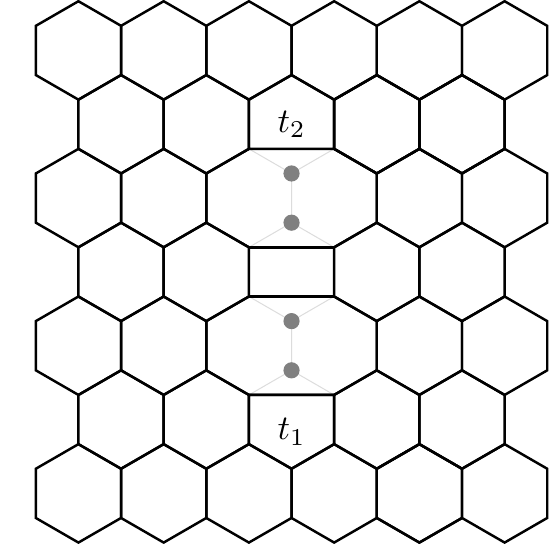}
        \subcaption{}
        \label{fig:Qd-cc-twist-movement-10}
    \end{subfigure}
    \caption{Lattice modification during braiding a pair of twists. The physical qubits disentangled from the code lattice during twist creation and movements are shown in gray color. Dotted lines show the new faces when some of them are added back to the code lattice. 
    Some of the modified faces grow in size during braiding. However, this growth is not permanent. These faces shrink once the braiding process is complete. 
    }
    \label{fig:Qd-cc-twist-movement}
\end{figure*}

\section{Conclusion}
In this paper, we initiated the study of twists in qudit color codes.
Specifically, we studied charge-and-color-permuting twists.
We gave the construction  of qudit color codes with charge-and-color-permuting twists starting from a $2$-colex.
Using the construction, we presented protocols to implement encoded generalized Clifford gates.
A future direction for further investigation could be to explore the set of all twist types possible in qudit color codes.
Since qudit color codes have rich anyon structure, one could expect more types of twist.
Another fruitful direction would be to study the mapping between the proposed codes and qudit surface codes.
Such mappings could lead to efficient decoders for qudit color codes with twists.

\section*{Acknowledgment}
We thank the anonymous reviewers for their comments.
This work was supported by  the Department of Science and Technology, Government of India, under 
Grant No.~DST/ICPS/QuST/Theme-3/2019/Q59.

\section*{Appendix}
\appendix
\section{Proof of Lemma~\ref{lm:stabilizer-commutation}}
\label{sec:stab-commutation}
In this section, we give a proof of stabilizer commutation.
Note that since all vertices in the lattice are trivalent, any two adjacent faces share an edge (and hence two common vertices).
Let $u$, $v$ be the common vertices to two adjacent faces $f$ and $g$.
When vertices $u$ and $v$ come from different bipartition, we take $u \in \mathsf{V}_e$ and $v \in \mathsf{V}_o$ .
We have to consider six cases here:
\begin{compactenum}[i)]
\item $\mathcal{D}_0$ and $\mathcal{D}_0$ : Follows from Equations~\eqref{eqn:qudit-stab}.

\item $\mathcal{D}_0$ and $\mathcal{D}_2$: 
The restriction of both type of stabilizer generators to common vertices is $Z_u Z_v$ and $ X_u X^\dagger_v$.
Since these operators commute, the corresponding stabilizer generators commute.

\item $\mathcal{D}_0$ and $\mathcal{T}$: Reason is similar to the one given above.

\item $\mathcal{D}_2$ and $\mathcal{D}_2$: 
Note that for any two adjacent modified faces, the common vertices belong to the same bipartition.
If the common vertices are in $\mathsf{V}_e$, then the restriction of stabilizer generators to common vertices is $Z_u X_v$ and $X_u Z_v$. 
If the common vertices are in $\mathsf{V}_o$, then $Z_u X^\dagger_v$ and $X^\dagger_u Z_v$ are the restrictions to common vertices.
Commutation of these stabilizer generators follows as the restrictions commute.
Note that the stabilizer generators defined on the same face have exactly the same support.
The phase resulting from exchange of operators on any two adjacent vertices is zero.
Therefore, the total phase resulting from operator exchange is zero and hence the stabilizer generators on the same face commute.

\item $\mathcal{D}_2$ and $\mathcal{T}$: Similar to the case of commutation of two modified faces, a twist and a modified face have common vertices the same bipartition.
In case of common vertices belonging to $\mathsf{V}_e$, the restriction of twist stabilizer is $Z_uX_u Z_vX_v$ and for the modified face we have $Z_u X_v$ and $X_v Z_u$.
For odd vertices, twist stabilizer restriction is $Z_uX^\dagger_u Z_vX^\dagger_v$ and for modified faces we have $Z_u X^\dagger_v$ and $X^\dagger_u Z_v$.
Hence, commutation follows.

\item $\mathcal{T}$ and $\mathcal{T}$: Two twist faces are always separated by a modified face in between. Hence they commute. 
If two twists happen to be adjacent during braiding, then the operators on the common vertices are the same. Therefore, twist stabilizers commute.
\end{compactenum}
Hence, the stabilizers defined in Equations~\eqref{eqn:qudit-stab}, ~\eqref{eqn:qudit-cc-modified-stab},~\eqref{eqn:qudit-cc-twist-stab} commute.

\section{Proof for Phase gate}
\label{sec:XZdagger}

The deformed logical $X$ operator after braiding twists $t_1$ and $t_2$ is shown in Fig.~\ref{fig:Qd-cc-phase-gate-3}.
We now show that this operator is equivalent up to a gauge to the operator $\Bar{X} \Bar{Z}^\dagger$ shown in Fig.~\ref{fig:Qd-cc-phase-gate-4}.

\begin{figure}[htb]
    \centering
    \begin{subfigure}{.225\textwidth}
        \centering
        \includegraphics[scale = .85]{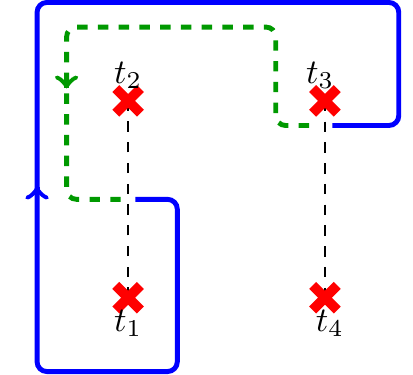}
        \subcaption{}
        \label{fig:Qd-cc-phase-gate-3}
    \end{subfigure}
    ~
    \begin{subfigure}{.225\textwidth}
        \centering
        \includegraphics[scale = .85]{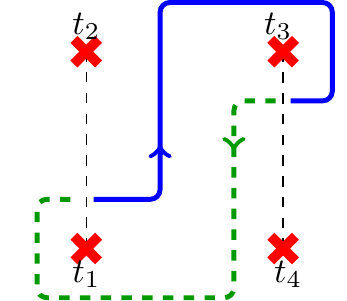}
        \subcaption{}
        \label{fig:Qd-cc-phase-gate-4}
    \end{subfigure}
    \caption{Deformation of logical $X$ operator after braiding twists $t_1$ and $t_2$. (a) The logical $X$ operator is deformed as shown after braiding twists $t_1$ and $t_2$. (b) The deformed logical operator shown in Fig.~\ref{fig:Qd-cc-phase-gate-3} is equivalent up to a gauge to the operator shown.}
    \label{fig:Qd-cc-phase-gate-proof}
\end{figure}

\begin{figure}
\centering
\begin{subfigure}{.225\textwidth}
    \centering
    \includegraphics[scale = .75]{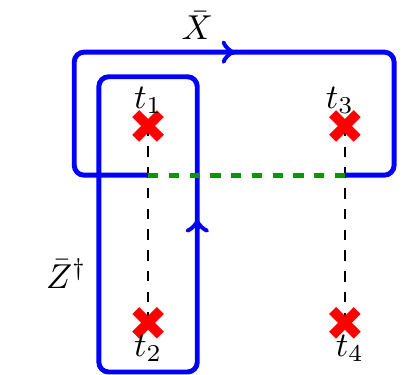}
    \subcaption{}
    \label{fig:Qd-cc-X-Z-dagger-1}
\end{subfigure}
~
\begin{subfigure}{.225\textwidth}
    \centering
    \includegraphics[scale = .75]{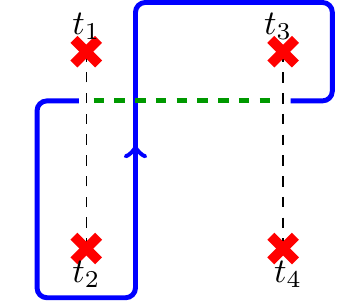}
    \subcaption{} 
    \label{fig:Qd-cc-X-Z-dagger-2}
\end{subfigure}
~
\begin{subfigure}{.225\textwidth}
    \centering
    \includegraphics[scale = .75]{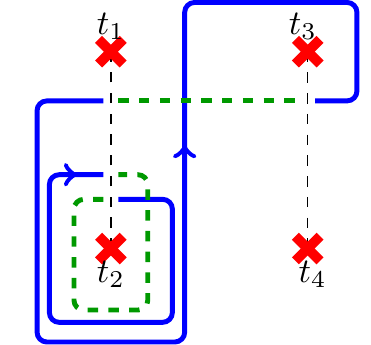}
    \subcaption{}
    \label{fig:Qd-cc-X-Z-dagger-3}
\end{subfigure}
~
\begin{subfigure}{.225\textwidth}
    \centering
    \includegraphics[scale = .75]{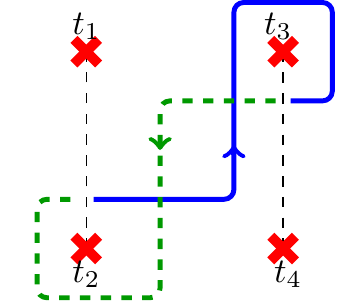}
    \subcaption{}
    \label{fig:Qd-cc-X-Z-dagger-4}
\end{subfigure}
~
\begin{subfigure}{.225\textwidth}
    \centering
    \includegraphics[scale = .75]{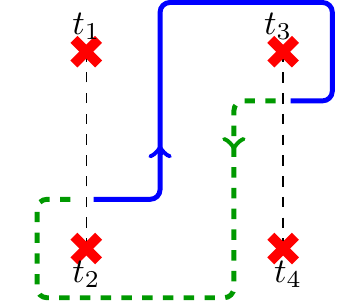}
    \subcaption{}
    \label{fig:Qd-cc-X-Z-dagger-5}
\end{subfigure}
~
\begin{subfigure}{.225\textwidth}
    \centering
    \includegraphics[scale = .75]{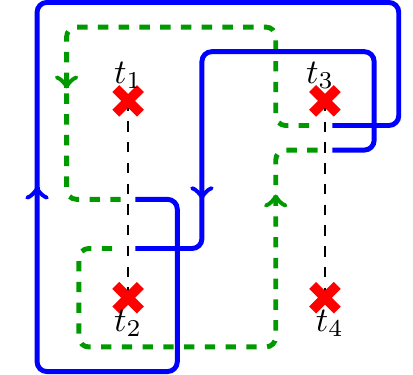}
    \subcaption{}
    \label{fig:Qd-cc-X-Z-dagger-6}
\end{subfigure}
~
\begin{subfigure}{.225\textwidth}
    \centering
    \includegraphics[scale = .75]{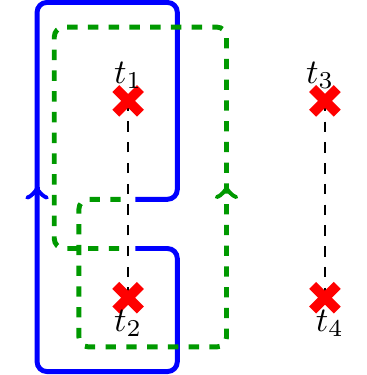}
    \subcaption{}
    \label{fig:Qd-cc-X-Z-dagger-7}
\end{subfigure}
~
\begin{subfigure}{.225\textwidth}
    \centering
    \includegraphics[scale = .75]{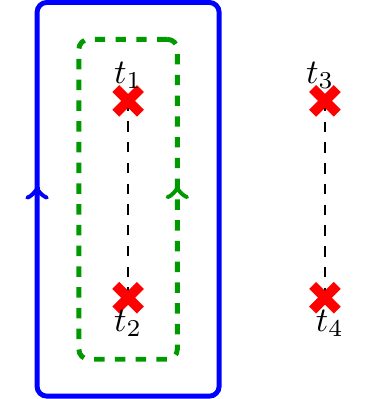}
    \subcaption{}
    \label{fig:Qd-cc-X-Z-dagger-8}
\end{subfigure}
~
\begin{subfigure}{.45\textwidth}
    \centering
    \includegraphics[scale = .75]{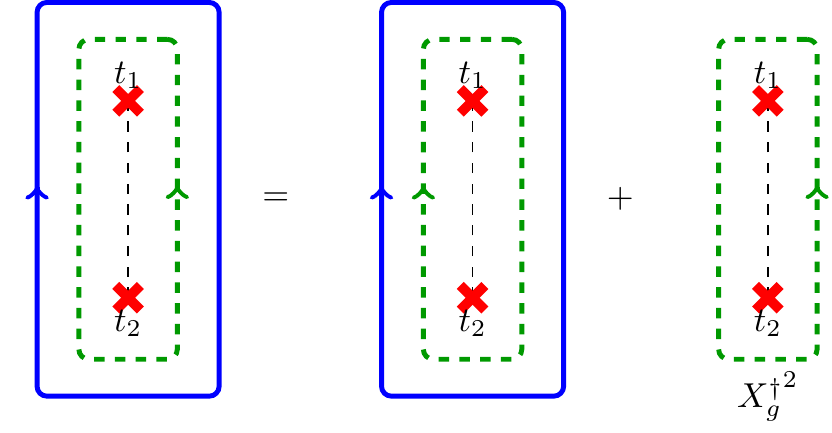}
    \subcaption{}
    \label{fig:Qd-cc-X-Z-dagger-9}
\end{subfigure}
\caption{Obtaining the string corresponding to operator $\Bar{X} \Bar{Z}^\dagger$ and its equivalence to the deformed string in Fig.~\ref{fig:Qd-cc-phase-gate-3}. (a) The operators $\Bar{X}$ and $\Bar{Z}^\dagger$ are shown. Note that the direction of $\Bar{Z}^\dagger$ is counterclockwise. (b) Combining the two strings shown in Fig.~\ref{fig:Qd-cc-X-Z-dagger-1}, we get the string as shown. (c) We add the stabilizer of twist $t_2$ to the string in Fig.~\ref{fig:Qd-cc-X-Z-dagger-2}. (d) The resulting string from Fig.~\ref{fig:Qd-cc-X-Z-dagger-3} double crosses itself. (e) The string in Fig.~\ref{fig:Qd-cc-X-Z-dagger-4} can be deformed further as shown. This string corresponds to $\Bar{X} \Bar{Z}^\dagger$. (f) The deformed string in Fig.~\ref{fig:Qd-cc-phase-gate-3} and the string in Fig.~\ref{fig:Qd-cc-X-Z-dagger-5} are combined. (g) Combination of strings in Fig.~\ref{fig:Qd-cc-X-Z-dagger-6} results in the string as shown. (h) By adding stabilizer of twist $t_2$, the string in Fig.~\ref{fig:Qd-cc-X-Z-dagger-7} is deformed as shown. (g) The string shown in Fig.~\ref{fig:Qd-cc-X-Z-dagger-8} is equivalent to a gauge operator.}
\label{fig:Qd-cc-X-Z-dagger}
\end{figure}

Before showing the equivalence between the operators shown in Fig.~\ref{fig:Qd-cc-phase-gate-3} and Fig.~\ref{fig:Qd-cc-phase-gate-4},  we consider the combination of operators $\Bar{X}$ and $\Bar{Z}^\dagger$.
The logical operators $\Bar{Z}^\dagger$ and $\Bar{X}$ are shown in Fig.~\ref{fig:Qd-cc-X-Z-dagger-1} and their combination is shown in Fig.~\ref{fig:Qd-cc-X-Z-dagger-2}.
To remove self-crossing, We add the stabilizer of twist $t_2$ to the string shown in Fig.~\ref{fig:Qd-cc-X-Z-dagger-2}, see Fig.~\ref{fig:Qd-cc-X-Z-dagger-3}.
The resulting string is shown in Fig.~\ref{fig:Qd-cc-X-Z-dagger-4} which can be deformed to the string shown in Fig.~\ref{fig:Qd-cc-X-Z-dagger-5}.
The string shown in Fig.~\ref{fig:Qd-cc-X-Z-dagger-5} corresponds to the operator $\Bar{X} \Bar{Z}^\dagger$.

We now show that the operator corresponding to the deformed string shown in Fig.~\ref{fig:Qd-cc-phase-gate-3} is equivalent up to gauge to the operator corresponding to string in Fig.~\ref{fig:Qd-cc-X-Z-dagger-5} 
Note that we add the conjugate of the operator shown in Fig.~\ref{fig:Qd-cc-X-Z-dagger-5}.
The combination of those strings is shown in Fig.~\ref{fig:Qd-cc-X-Z-dagger-6} can be simplified to the string shown in Fig.~\ref{fig:Qd-cc-X-Z-dagger-7}.
This string is further simplified by adding stabilizer of twist $t_2$ and obtain the string shown in Fig.~\ref{fig:Qd-cc-X-Z-dagger-8}.
The operator corresponding to the string in Fig.~\ref{fig:Qd-cc-X-Z-dagger-8} can be decomposed as shown in Fig.~\ref{fig:Qd-cc-X-Z-dagger-9}.
The blue and green string around twists is obtained by combining twist stabilizers and modified face stabilizers.
The green string with counterclockwise direction is the gauge operator, see Fig.~\ref{fig:Qd-cc-LO-charge-color-full}.
Therefore, the operator corresponding to the deformed string shown in Fig.~\ref{fig:Qd-cc-phase-gate-3} is equivalent up to gauge to the operator $\Bar{X} \Bar{Z}^\dagger$.

\section{An example}
In this section we present an example of qudit color code with two pairs of twists.
The smallest separation between any pair of twists is two.
The lattice with two pairs of twists is shown in Fig.~\ref{fig:Qd-cc-small-code}.
The additional faces (with two vertices) on the boundary are added to maintain trivalency.
Similar to normal faces, two stabilizers, one of $Z$ type and the other $X$ type are defined on them.

\begin{figure}[t]
    \centering
    \includegraphics[scale = .85]{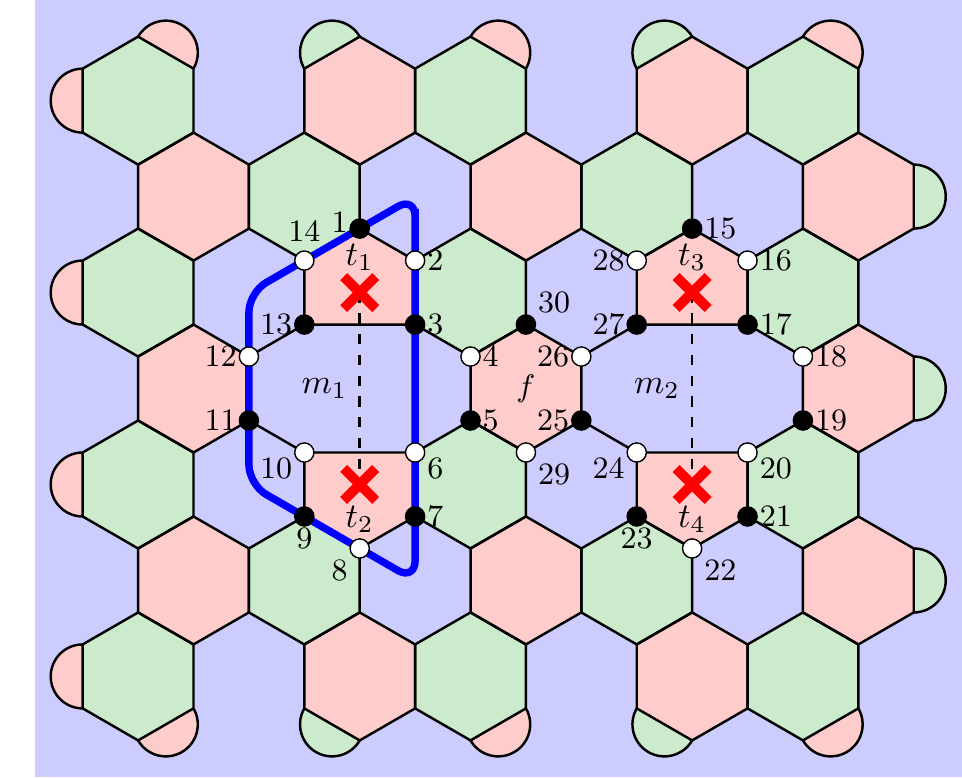}
    \caption{An example of qudit color code with twists.}
    \label{fig:Qd-cc-small-code}
\end{figure}

We now list explicitly the stabilizer generators defined on twist and modified faces.
The stabilizer generators defined on twist faces are
\begin{eqnarray}
B_{t_1} &=& (ZX)_{1} (ZX^\dagger)_{2} (ZX)_{3} (ZX)_{13} (ZX^\dagger)_{14},\\
B_{t_2} &=& (ZX^\dagger)_{6} (ZX)_{7} (ZX^\dagger)_{8} (ZX)_{9} (ZX^\dagger)_{10}, \\
B_{t_3} &=& (ZX^\dagger)_{20} (ZX)_{21} (ZX^\dagger)_{22} (ZX)_{23} (ZX^\dagger)_{24},\\
B_{t_4} &=& (ZX)_{15} (ZX^\dagger)_{16} (ZX)_{17} (ZX)_{27} (ZX^\dagger)_{28}.
\label{eqn:example-twist-stab}
\end{eqnarray}
Stabilizer generators defined on modified faces are
\begin{eqnarray}
B_{m_1,1} &=& X_{3} X_{4}^\dagger X_{5} X_{6}^\dagger Z_{10} Z_{11} Z_{12} Z_{13},\\
B_{m_1,2} &=& Z_{3} Z_{4} Z_{5} Z_{6} X_{10}^\dagger X_{11} X_{12}^\dagger X_{13}, \\
B_{m_2,1} &=& X_{17} X_{18}^\dagger X_{19} X_{20}^\dagger Z_{24} Z_{25} Z_{26} Z_{27},\\
B_{m_2,2} &=& Z_{17} Z_{18} Z_{19} Z_{20} X_{24}^\dagger X_{25} X_{26}^\dagger X_{27}.
\label{eqn:example-mod-stab}
\end{eqnarray}
On the normal face indicated $f$ in Fig.~\ref{fig:Qd-cc-small-code}, stabilizer generators are defined as below
\begin{eqnarray}
B_f^X &=& X_4^\dagger X_5 X_{25} X_{26}^\dagger X_{29}^\dagger X_{30}, \\
B_f^Z &=& Z_4 Z_5 Z_{25} Z_{26} Z_{29} Z_{30}.
\end{eqnarray}
The stabilizer generators defined on the additional red and green faces along the boundary are of the form $Z_u Z_v$ and $X_u X_v^\dagger$ where $u \in \mathsf{V}_e$ and $v \in \mathsf{V}_o$.

Note that in this example, we get two encoded logical qudits but we treat one of them as gauge qudit.
So in effect the lattice in Fig.~\ref{fig:Qd-cc-small-code} defines a  $[[118,1]]_q$ code.
The smallest nontrivial loop encircles twists created together see Fig.~\ref{fig:Qd-cc-small-code} where this operator is shown as blue string.
The operator corresponding to this string has weight $10$.

%\newpage
%\bibliographystyle{apsrev4-2}
%\bibliography{references}

%apsrev4-2.bst 2019-01-14 (MD) hand-edited version of apsrev4-1.bst
%Control: key (0)
%Control: author (72) initials jnrlst
%Control: editor formatted (1) identically to author
%Control: production of article title (-1) disabled
%Control: page (0) single
%Control: year (1) truncated
%Control: production of eprint (0) enabled
%

\end{document}